\newenvironment{proof}[1][Proof]{\begin{trivlist}
\item[\hskip \labelsep {\bfseries #1}]}{\end{trivlist}}
\newtheorem{lemma}{Lemma}
\newtheorem{theorem}{Theorem}
\begin{document}
\title{Efficient QAM Signal Detector for Massive MIMO Systems via PS-ADMM Approach}
\author{Quan~Zhang, Jiangtao Wang, Yongchao~Wang,~\IEEEmembership{Senior~Member,~IEEE}
}

\markboth{}
{}

\maketitle

\begin{abstract}
In this paper, we design an efficient quadrature amplitude modulation (QAM) signal detector for massive multiple-input multiple-output (MIMO) communication systems via the penalty-sharing alternating direction method of multipliers (PS-ADMM). Its main content is as follows:
first, we formulate QAM-MIMO detection as a maximum-likelihood optimization problem with bound relaxation constraints.
Decomposing QAM signals into a sum of multiple binary variables and exploiting introduced binary variables as penalty functions, we transform the detection optimization model to a non-convex sharing problem;
second, a customized ADMM algorithm is presented to solve the formulated non-convex optimization problem. In the implementation, all variables can be solved analytically and in parallel;
third, it is proved that the proposed PS-ADMM algorithm converges under mild conditions.
Simulation results demonstrate the effectiveness of the proposed approach.
\end{abstract}

\begin{IEEEkeywords}
 Massive MIMO, maximum-likelihood detection, penalty method, sharing-ADMM, non-convex optimization.
\end{IEEEkeywords}

\IEEEpeerreviewmaketitle

\section{Introduction}
\IEEEPARstart{M}{assive} multiple-input multiple-output (MIMO) technology is considered to be one of the disruptive technologies for the fifth-generation (5G) communication systems \cite{marzetta2010noncooperative}, \cite{Boccardi2013Five}. The foremost benefit of massive MIMO is significant increase in the spatial degrees of freedom which can improve throughput and energy efficiency significantly in comparison with conventional MIMO systems \cite{Ngo2013Energy}.
However, numerous practical challenges arise in implementing massive MIMO technology in order to achieve such improvements.
One such challenge is signal detection lying in the uplink massive systems since it is difficult to achieve an effective compromise among good detecting performance, low computational complexity, and high processing  parallelism \cite{albreem2019massive}.

The optimal MIMO detection is the maximum-likelihood (ML) detector \cite{Verdu1986Minimum}, which suffers from an exponential increase in computational complexity with an increasing number of terminal antennas and entails prohibitive complexity for massive MIMO detection \cite{albreem2019massive}. Various nonlinear detection methods, such as sphere decoding (SD) \cite{damen2003maximum}, semidefinite relaxation (SDR) \cite{Zhi2010Semidefinite}, PSK detector \cite{luo2003efficient} and K-best \cite{guo2006algorithm}, have achieved near-optimal ML detection performance in small-scale MIMO systems \cite{yang2015fifty}, but they are still prohibitively complex for large-scale or high-order modulation MIMO systems.
Linear detection methods such as minimum mean square error (MMSE) \cite{6771364} and zero-forcing (ZF) \cite{1092893}, are one of the best choices with respect to (w.r.t.) the tradeoff in performance and complexity, especially when the base station (BS)-to-user-antenna ratio is large \cite{rusek2012scaling}. In order to further reduce computational complexity and improve performance,
on the one hand, numerous detectors that can be called as approximate matrix inversion methods have been proposed to reduce the computational complexity of linear detectors \cite{albreem2019massive}, such as Neumann series (NS) \cite{wu2014large}, \cite{vcirkic2014complexity}, Gauss-Seidel (GS) \cite{dai2015low}, \cite{wu2016efficient}, Richardson (RI) \cite{gao2014low}, and conjugate gradient (CG) \cite{yin2014conjugate} methods. However, the decrease in computational complexity of these algorithms comes at the expense of the loss of detection performance; meanwhile, these methods also deliver poor bit error rate (BER) performance when the BS-to-user antenna ratio is close to one.
On the other hand, various nonlinear detection algorithms have been improved to obtain better detection performance than linear detection algorithms for massive MIMO systems. In \cite{wu2016high}, an optimized coordinate descent with the box-constrained equalization (OCD-BOX) detector shows better BER performance with low hardware complexity; however, it can't be implemented in parallel. In \cite{castaneda2016data}, the triangular approximate semidefinite relaxation (TASER) detector achieves near-ML performance while providing comparable hardware-efficiency, but extending TASER to support higher-order modulation schemes is the subject of ongoing research.

In recent years, the alternating direction method of multipliers (ADMM) technique is widely used to solve convex and non-convex problems due to its simplicity, operator splitting capability, and guaranteed-convergence under mild conditions \cite{boyd2011distributed}. The ADMM strategy was first introduced in MIMO detection by Takapoui et al. in \cite{7526551}. In \cite{shahabuddin2021admm}, a detection algorithm based on ADMM with infinity norm or box-constrained equalization, named ADMIN, was proposed, which outperformed the state-of-the-art linear detectors by a large margin if the BS-to-user-antenna ratio is small.
The ADMM detection algorithm was applied in various scenarios for MIMO systems \cite{souto2016mimo}, \cite{souto2018efficient} and improved for massive MIMO systems \cite{lopes2018iterative,elgabli2019a,elgabli2019pro,feng2019a,8755566}. Although the existing ADMM-based methods provide better BER performance than conventional detectors, there are two major drawbacks: first, the constraints set of optimization problem is over-relaxed and second, these works cannot be applied to high-order modulation systems directly.

In this paper, we focus on designing a new ADMM-based quadrature amplitude modulation (QAM) signal detector for massive MIMO systems. By exploiting ideas of penalized bound relaxation for the ML detection formulation \cite{ma2002quasi}, binary transformation for high-order QAM signals \cite{mao2007semidefinite}, and sharing-ADMM technique \cite{boyd2011distributed}, \cite{hong2016convergence}, we obtain a new detector for massive MIMO systems called PS-ADMM, which has favorable BER performance and cheap computational cost. The main contributions of this paper are summarized as follows:

\begin{itemize}
\item \emph{Penalty sharing formulation}:
  the well known MIMO detection ML formulation is equivalent to a non-convex sharing problem. In it, high-order modulation signals are decomposed into a sum of multiple binary ones and penalty functions involving these binaries are introduced in the sharing formulation.
\item \emph{Efficient implementation}: sharing-ADMM algorithms are designed to solve the formulated optimization model. In the implementation, all the variables in subproblems can be solved analytically, accurately, and updated in parallel in each iteration step. As a result, the PS-ADMM algorithm achieves favourable BER performance with cheap computational complexity, especially when the BS-to-user-antenna ratio is close to one.
\item \emph{Theoretically-guaranteed performance}:
we prove that the proposed PS-ADMM algorithm is convergent arbitrarily close to a stationary point of the non-convex optimization problem if proper parameters are chosen.
\end{itemize}

The rest of this paper is organized as follows.
In Section \ref{sec:problem-formulation}, we formulate the massive MIMO detection problem to a non-convex sharing problem.
In Section \ref{sec:PS-ADMM Solving Algorithm}, an efficient sharing-ADMM algorithm is customized to solve the formulated problem.
Section \ref{sec:Analysis-psadmm} presents the detailed performance analysis, including convergence, iteration complexity, and computational cost of the proposed PS-ADMM algorithm.
Simulation results, which show the effectiveness of our proposed PS-ADMM algorithm, are presented in Section \ref{sec:Simulation results}
and the conclusions are given in Section \ref{sec:Conclusion}.

\emph{Notations}: In this paper, bold lowercase, uppercase, and italic letters denote column vectors, matrices, and scalars respectively;
 $\mathds{C}$ denotes the complex field;
 $(\cdot)^H$ symbolizes the conjugate transpose operation;
 $x_{iR}$, $x_{iR}$ denote the real and imaginary parts of the $ith$ entry of vector $\mathbf{x}$ respectively;
 $\|\cdot\|_2$ represents the 2-norm of vector $\mathbf{x}$;
 $\underset{[a,b]}\Pi(\cdot)$ denotes the Euclidean projection operator onto interval $[a,b]$;
 $\nabla(\cdot)$ represents the gradient of a function;
 $\rm{Re}(\cdot)$ takes the real part of the complex variable;
 $\lambda_{\rm min}(\cdot)$ and $\lambda_{\rm max}(\cdot)$ denote the minimum and maximum eigenvalues of a matrix respectively;
 $\langle \cdot,\cdot\rangle$ denotes the dot product operator; and ${\mathbf{I}}$ denotes an identity matrix.
\vspace{-0.1cm}
\section{System Model And Problem Formulation}
\label{sec:problem-formulation}

The considered signal detection problem lies in uplink multiuser massive MIMO systems, where BS equipped with $B$ antennas serves $U$ single-antenna users. Here, we assume $ B \ge U$. Typically, the received signal vector at BS can be characterized by the following model
\begin{equation}\label{transmission model}
{\mathbf{r}} = {\mathbf{Hx}} + {\mathbf{n}},
\end{equation}
where $\mathbf{x}\in \mathcal{X}^{U}$ is the transmitted signal vector from $U$ users and $\mathcal{X}$ refers to the signal constellation set, $\mathbf{r}\in \mathbb{C}^{B}$ is the BS received signal vector, $\mathbf{H}\in \mathbb{C}^{B\times U}$ denotes the MIMO channel matrix, and $\mathbf{n}\in \mathbb{C}^{B}$ denotes additive white Gaussian noise. The entries of $\mathbf{H}$ and $\mathbf{n}$ are assumed to be independent and identically distributed (i.i.d.) complex Gaussian variables with zero mean.

The optimal MIMO ML detector for $4^Q$-QAM signals, i.e., achieving minimum error probability of detecting $\mathbf{x}$ from the received signal $\mathbf{r}$ can be formulated as the following discrete least square problem \cite{Zhi2010Semidefinite}
{\setlength\abovedisplayskip{2pt}
\setlength\belowdisplayskip{2pt}
\setlength\jot{2pt}
\begin{equation}\label{eq:MLdetection}
\begin{split}
&\min_{\mathbf{x}\in \mathcal{X}^U} \Vert\mathbf{r}-\mathbf{H}\mathbf{x} \Vert_2^{2}, \\
\end{split}
\end{equation}
where} ${\mathcal X}= \{x= x_{\rm R} + j x_{\rm I} \vert x_{\rm R}, x_{\rm I} \in \{\pm 1, \pm 3, \cdots , \pm (2^{Q} -1)\}\}$ and $Q$ is some positive integer. Obtaining its global optimal solution is prohibitive in practice since the corresponding computational complexity grows exponentially with the users' number $U$, BS's antenna number $B$, and set $\mathcal{X}$'s size \cite{chen2017a}. In the following, by exploiting insight structures of the model \eqref{eq:MLdetection}, we propose a {\it relaxation-tighten} technique and transform it to the well-known sharing problem.

Any transmitted signal vector $\mathbf{x}\in\mathcal{X}^{U}$ can be expressed by
{\setlength\abovedisplayskip{2pt}
\setlength\belowdisplayskip{2pt}
\begin{equation}\label{eq:XsumXq}
\mathbf{x}=\sum_{q=1}^{Q}2^{q-1} \mathbf{x}_q,\ q = 1,\cdots,Q.
\end{equation}
Plugging} \eqref{eq:XsumXq} into the model \eqref{eq:MLdetection}, it can be equivalent to
{\setlength\abovedisplayskip{2pt}
 \setlength\belowdisplayskip{2pt}
 \setlength\jot{2pt}
\begin{subequations}\label{eq:S_MLdetection}
\begin{align}
&\hspace{0.4cm}\min_{\mathbf{x}_q} \hspace{0.2cm} \frac{1}{2}\Vert\mathbf{r}-\mathbf{H}(\sum_{q=1}^{Q}2^{q-1} \mathbf{x}_q) \Vert_2 ^{2},  \label{eq:S_MLdetection_a} \\
&\hspace{0.5cm}{\rm {s.t.}}   \hspace{0.3cm}\mathbf{x}_q \in \mathcal{X}_q^{U},\ q = 1,\cdots,Q, \label{eq:S_MLdetection_b}
\end{align}
\end{subequations}
where} $\mathcal{X}_q=\{\mathbf{x}_q = \mathbf{x}_{q \rm R} + j \mathbf{x}_{q \rm I}| \mathbf{x}_{q\rm R}, \mathbf{x}_{q\rm I} \in \{1,-1\}\}$. Relaxing the binary constraints in \eqref{eq:S_MLdetection_b} to the box constraints and then tightening the relaxation by adding the quadratic penalty function into the objective \eqref{eq:S_MLdetection_a}, we formulate the following model
{\setlength\abovedisplayskip{2pt}
  \setlength\belowdisplayskip{2pt}
   \setlength\jot{2pt}
\begin{equation}\label{eq:PS_ML}
\begin{split}
&\hspace{0.3cm} \min_{\mathbf{x}_q} \hspace{0.2cm} \frac{1}{2}\Vert\mathbf{r}-\mathbf{H}(\sum_{q=1}^{Q}2^{q-1} \mathbf{x}_q) \Vert_2 ^{2}-\sum_{q=1}^{Q}\frac{\alpha_q}{2}\Vert\mathbf{x}_q \Vert_2 ^{2},\\
&\hspace{0.4cm} {\rm {s.t.}} \hspace{0.4cm}\mathbf{x}_q\in \tilde{\mathcal{X}}_q^{U},\;q = 1,\cdots,Q,
\end{split}
\end{equation}
where} penalty parameters $\alpha_q \ge 0$ and $\tilde{\mathcal{X}}_q= \{\mathbf{x}_{q\rm R} + j \mathbf{x}_{q\rm I} \vert\mathbf{x}_{q \rm R}, \mathbf{x}_{q \rm I} \in [-1\ 1]\}$. The model \eqref{eq:PS_ML} can be cast as the sharing problem \cite[Section 7.3]{boyd2011distributed}, but with the important difference that the objective function is non-convex. It is easy to see that the non-convex quadratic penalty function can make the integer solutions more favorable.

\section{PS-ADMM Solving Algorithm}\label{sec:PS-ADMM Solving Algorithm}

In this section, we develop an efficient ADMM algorithm, named PS-ADMM, to solve the model \eqref{eq:PS_ML}. First, to facilitate distributed computation, we transform it equivalently to a linearly constrained problem by introducing auxiliary variable $\mathbf{x}_0\in \mathbb{C}^{U}$ 
{\setlength\abovedisplayskip{2pt}
    \setlength\belowdisplayskip{2pt}
      \setlength\jot{2pt}
\begin{equation}\label{eq:PS_ADMM}
\begin{split}
&\min_{\mathbf{x}_0, \mathbf{x}_q} \hspace{0.2cm} \frac{1}{2}\Vert\mathbf{r}-\mathbf{H}\mathbf{x}_0 \Vert_2 ^{2}-\sum_{q=1}^{Q}\frac{\alpha_q}{2}\Vert\mathbf{x}_q \Vert_2 ^{2},\\
&\ {\rm {s.t.}} \hspace{0.3cm} \mathbf{x}_0=\sum_{q=1}^{Q}2^{q-1} \mathbf{x}_q,\ \mathbf{x}_q\in  \tilde{\mathcal{X}}_q^{U}, q = 1,\cdots,Q.
\end{split}
\end{equation}
The} augmented Lagrangian function of the model \eqref{eq:PS_ADMM} can be expressed as
{\setlength\abovedisplayskip{2pt}
    \setlength\belowdisplayskip{2pt}
      \setlength\jot{2pt}
\begin{align}
\begin{split}\label{eq:lagrangian_PSADMM}
&L_{\rho}(\{\mathbf{x}_q\}_{q=1}^{Q},\mathbf{x}_0,\mathbf{y}) = \frac{1}{2}\Vert\mathbf{r}\!-\!\mathbf{H}\mathbf{x}_0 \Vert_2 ^{2}\!-\!\sum_{q=1}^{Q}\frac{\alpha_q}{2}\!\Vert\mathbf{x}_q\!\Vert_2 ^{2}\\
&\ \ +{\rm Re}\big\langle \mathbf{x}_0-\sum_{q=1}^{Q}2^{q-1} \mathbf{x}_q, \mathbf{y}\big\rangle+\frac{\rho}{2}\big\|\mathbf{x}_0-\sum_{q=1}^{Q}2^{q-1}\mathbf{x}_q \big\|_2^2,
\end{split}
\end{align}
where} $\mathbf{y}\in \mathbb{C}^{U}$ and $\rho>0$ are the Lagrangian multiplier and penalty parameter respectively.
Based on the above augmented Lagrangian, the proposed PS-ADMM solving algorithm framework can be described as
\begin{subequations}\label{PSADMM_update_ori}
\begin{align}
&\mathbf{x}_q^{k+1}\!\! =\!  \mathop{\arg\min}_{\mathbf{x}_q\in\tilde{\mathcal{X}}_q^{U}} L_{\rho}(\!\mathbf{x}_1^{k+1}\!\!\!\!,\!\cdots\!,\mathbf{x}_{q-1}^{k+1}\!,\mathbf{x}_q, \mathbf{x}_{q+1}^k\!,\!\cdots\!,\!\mathbf{x}_Q^k,\!\mathbf{x}_0^{k},\! \mathbf{y}^{k}),\nonumber\\
& \hspace{5.8cm} q=1,\cdots,Q,  \label{eq:x_q_update}\\
&\mathbf{x}_0^{k+1}=\mathop{\arg\min}_{\mathbf{x}_0} \;\;L_{\rho}( \{\mathbf{x}_q^{k+1}\}_{q=1}^{Q}, \mathbf{x}_0, \mathbf{y}^{k}), \label{eq:x0_update}\\
&\mathbf{y}^{k+1}=\mathbf{y}^k+\rho\Big(\mathbf{x}_0^{k+1}-\sum_{q=1}^{Q}2^{q-1} \mathbf{x}_q^{k+1}\Big),\label{eq:y_update}
\end{align}
\end{subequations}
where $k$ denotes iteration number.

The main challenge of implementing \eqref{PSADMM_update_ori} lies in how to solve optimization subproblems \eqref{eq:x_q_update} and \eqref{eq:x0_update} efficiently.
For \eqref{eq:x_q_update}, it can be observed that $L_{\rho}(\{\mathbf{x}_q\}_{q=1}^{Q},
\mathbf{x}_0^k, \mathbf{y}^k)$ is a strongly convex quadratic function w.r.t. $\mathbf{x}_q$ when $4^{q-1}\rho-\alpha_q>0$. It means that, if we set $4^{q-2}\rho>\alpha_q$,  the optimal solution of the subproblems \eqref{eq:x_q_update} can be obtained through the following procedure:

Set the gradient of the corresponding augmented Lagrangian function w.r.t. $\mathbf{x}_q$ to be zero, i.e.,
\begin{equation}
\nabla_{\mathbf{x}_q}L_{\rho}(\mathbf{x}_q,\!\mathbf{x}_1^{k+1}\!\!,\!\cdots\!,\mathbf{x}_{q-1}^{k+1}\!, \mathbf{x}_{q+1}^k,\cdots,\mathbf{x}_Q^k,\mathbf{x}_0^{k},\!\mathbf{y}^{k}) \!= \!0,
\end{equation}
which leads to the following linear equation
\begin{equation}\label{eq:x_q_update_zero}
\begin{split}
& \nabla_{\mathbf{x}_q} \Big(-\frac{\alpha_q}{2}\Vert\mathbf{x}_q \Vert_2 ^{2}-{\rm Re}\langle 2^{q-1}\mathbf{x}_q,\;\mathbf{y}^k\rangle  \\
&\!+\!\frac{\rho}{2}\|\mathbf{x}_0^{k}\!-\!\!\sum_{i<q} 2^{i-1}\mathbf{x}_i^{k+1}\!-\!\!\sum_{i>q}2^{i-1} \mathbf{x}_i^{k}\!-\!2^{q-1} \mathbf{x}_q \|_2^2 \Big)\!=\!0.
\end{split}
\end{equation}
By solving \eqref{eq:x_q_update_zero}, we can obtain
\begin{equation}\label{eq: solution_Xq}
\begin{split}
&\mathbf{x}_q^{k+1} = \underset{[-1,1]}\Pi \bigg(\frac{2^{q-1}}{4^{q-1}\rho-\alpha_q}\Big( \rho\mathbf{x}_0^{k}-\rho\sum_{i<q} 2^{i-1}\mathbf{x}_i^{k+1} \\
&\hspace{0.9cm} -\rho\sum_{i>q}2^{i-1} \mathbf{x}_i^{k}+ \mathbf{y}^k \Big) \bigg),\ \ \ q = 1,\cdots,Q,
\end{split}
\end{equation}
where operator $\underset{[-1,1]}\Pi(\cdot)$ projects every entry's real part and imaginary part of the input vector onto [-1,1] respectively.

Moreover, $L_{\rho}(\{\mathbf{x}_q^{k+1}\}_{q=1}^{Q}, \mathbf{x}_0, \mathbf{y}^k)$ is a strongly convex quadratic function w.r.t. $\mathbf{x}_0$ since $\rho>0$ and matrix $\mathbf{H}^{H} \mathbf{H} $ is positive semidefinite, hence the optimal solution of the subproblem \eqref{eq:x0_update} can also be obtained by setting $\nabla_{\mathbf{x}_0} L_{\rho}(\{\mathbf{x}_q^{k+1}\}_{q=1}^{Q}, \mathbf{x}_0, \mathbf{y}^{k}) $ to be zero and solving the corresponding linear equation, which results in
\begin{equation}\label{eq: solution X0}
\mathbf{x}_0^{k+1}\!=\!{(\mathbf{H}^{H} \mathbf{H}\!+\!\rho\mathbf{I})}^{-1}\bigg(\! \mathbf{H}^{H} \mathbf{r}\!+\! \rho\sum_{q=1}^{Q}2^{q-1} \mathbf{x}^{k+1}_q\!-\!\mathbf{y}^{k}\!\bigg).
\end{equation}

 To be clear, we summarize the proposed PS-ADMM algorithm for solving model \eqref{eq:PS_ADMM} in \emph{Algorithm \ref{PS-ADMM algorithm}}.

\begin{algorithm}[!t]
\caption{The proposed PS-ADMM algorithm}
\label{PS-ADMM algorithm}
\begin{algorithmic}[1]
\REQUIRE $\mathbf{H}$, $\mathbf{r}$, ${Q}$, $\rho$, $\{\alpha_q\}_{q=1}^{Q}$\\
\ENSURE $\mathbf{x}_0^k$\\
\STATE Initialize $\{\mathbf{x}_q^{1}\}_{q=1}^{Q}, \mathbf{x}_0^{1}, \mathbf{y}^1$ as the all-zeros vectors\footnotemark.
\STATE  \textbf{For $k = 1,2,\cdots$}
\STATE \hspace{0.2cm} Step 1: Update $\{\mathbf{x}_q^{k+1}\}_{q=1}^{Q}$ sequentially via \eqref{eq: solution_Xq}.
\STATE \hspace{0.2cm} Step 2: Update $\mathbf{x}_0^{k+1}$ via \eqref{eq: solution X0}.
\STATE \hspace{0.2cm} Step 3: Update $\mathbf{y}^{k+1}$ via \eqref{eq:y_update}.
\STATE \textbf{Until} some preset condition is satisfied.
\end{algorithmic}
\end{algorithm}
\footnotetext{Unlike some existing algorithms that are derived from the solution of the simple linear algorithm such as MMSE or ZF, there are no computations required for PS-ADMM to perform initialization, and the initial values $\{\mathbf{x}_q^{1}\}_{q=1}^{Q}$, $\mathbf{x}_0^{1}$, $\mathbf{y}^{1}$ can be set to zeros, ones, minus ones, and random values.}
\vspace{-0.1cm}

\section{Performance Analysis}
\label{sec:Analysis-psadmm}

In this section, a detailed analysis of the proposed PS-ADMM algorithm on convergence, iteration complexity, and computational cost are provided.

\subsection{Convergence property}
We have the following theorem to show convergence properties of the proposed PS-ADMM algorithm.
\begin{theorem}\label{thm:convergence}
Assume parameters $\rho$ and $\alpha_q$ satisfy $4^{q-1}\rho>\alpha_q$ and $\rho > \sqrt{2} \lambda_{\rm max}(\mathbf{H}^H\mathbf{H})$, where $q = 1,\cdots,Q$. Then, sequence $\{\{\mathbf{x}^{k}_q\}_{q=1}^{Q}, \mathbf{x}_0^{k}, \mathbf{y}^{k}\}$ generated by \emph{Algorithm \ref{PS-ADMM algorithm}} is convergent, i.e.,
\begin{equation}\label{convergence variables}
\begin{split}
&\lim\limits_{k\rightarrow+\infty}\mathbf{x}^{k}_q=\mathbf{x}^*_q, \ \ \lim\limits_{k\rightarrow+\infty}\mathbf{x}_0^{k}=\mathbf{x}_0^*, \lim\limits_{k\rightarrow+\infty}\mathbf{y}^{k}=\mathbf{y}^*, \\
& \hspace{0.3cm} \forall~\mathbf{x}_q\in \tilde{\mathcal{X}}_q^{U}, \; q = 1,\cdots,Q.
\end{split}
\end{equation}
Moreover, $\{\mathbf{x}^*_q\}_{q=1}^{Q}$ is a stationary point of problem \eqref{eq:PS_ML}, i.e., $\forall~\mathbf{x}_q\in \tilde{\mathcal{X}}_q^{U},\; q = 1,\cdots,Q$, which satisfies
\begin{align}\label{stationary point}
 {\rm Re}\Big \langle\nabla_{\mathbf{x}_q} \Big(\ell\big(\sum_{q=1}^{Q}2^{q-1} \mathbf{x}_q^*\big)-\sum_{q=1}^{Q}\frac{\alpha_q}{2}\Vert\mathbf{x}_q^* \Vert_2 ^{2}\Big),\mathbf{x}_q-\mathbf{x}^*_q\Big\rangle
 \ge 0,
\end{align}
\end{theorem}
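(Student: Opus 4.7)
The plan is to follow the standard non-convex ADMM convergence template: control the dual increments by primal increments, establish a sufficient-decrease inequality on the augmented Lagrangian $L_\rho$, show $L_\rho$ is bounded below, and pass to the limit in the subproblem optimality conditions. The very first step is to eliminate $\mathbf{y}^{k+1}$ in closed form. Setting the gradient in \eqref{eq:x0_update} to zero gives $\mathbf{y}^{k+1} = \mathbf{H}^H\mathbf{r} - \mathbf{H}^H\mathbf{H}\mathbf{x}_0^{k+1}$, so $\|\mathbf{y}^{k+1}-\mathbf{y}^k\|_2^2 \le \lambda_{\max}^2(\mathbf{H}^H\mathbf{H})\,\|\mathbf{x}_0^{k+1}-\mathbf{x}_0^k\|_2^2$. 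This is what lets the (inherently dangerous) dual term be paid for by a primal-decrease term later.

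Next, I would exploit that under the hypothesis $4^{q-1}\rho>\alpha_q$ each $\mathbf{x}_q$-subproblem is strongly convex with modulus $4^{q-1}\rho-\alpha_q$, and each $\mathbf{x}_0$-subproblem is strongly convex with modulus at least $\rho$. Stacking the resulting per-update decreases with the dual increment $L_\rho(\cdot,\mathbf{y}^{k+1})-L_\rho(\cdot,\mathbf{y}^k)=\rho^{-1}\|\mathbf{y}^{k+1}-\mathbf{y}^k\|_2^2$ yields
\begin{equation*}
L_\rho^{k+1}-L_\rho^k \le \Big(\tfrac{\lambda_{\max}^2(\mathbf{H}^H\mathbf{H})}{\rho}-\tfrac{\rho}{2}\Big)\|\mathbf{x}_0^{k+1}-\mathbf{x}_0^k\|_2^2 -\sum_{q=1}^{Q}\tfrac{4^{q-1}\rho-\alpha_q}{2}\|\mathbf{x}_q^{k+1}-\mathbf{x}_q^k\|_2^2,
\end{equation*}
which is strictly negative precisely when $\rho>\sqrt{2}\lambda_{\max}(\mathbf{H}^H\mathbf{H})$; this is exactly the hypothesis in the theorem. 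To show $L_\rho$ is bounded below I would substitute $\mathbf{y}^{k+1}=\mathbf{H}^H(\mathbf{r}-\mathbf{H}\mathbf{x}_0^{k+1})$ and complete the square in $\mathbf{x}_0$, rewriting $L_\rho$ as $\tfrac{1}{2}\|\mathbf{r}-\mathbf{H}\mathbf{u}\|_2^2 + \tfrac{1}{2}(\mathbf{x}_0-\mathbf{u})^H(\rho\mathbf{I}-\mathbf{H}^H\mathbf{H})(\mathbf{x}_0-\mathbf{u}) - \sum_q \tfrac{\alpha_q}{2}\|\mathbf{x}_q\|_2^2$ with $\mathbf{u}=\sum_q 2^{q-1}\mathbf{x}_q$. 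Under $\rho>\lambda_{\max}(\mathbf{H}^H\mathbf{H})$ the first two terms are nonnegative, and the last is bounded since $\mathbf{x}_q\in\tilde{\mathcal{X}}_q^U$ is a bounded box.

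Monotone decrease plus lower-boundedness gives convergence of $L_\rho^k$ and summability of all primal increments, hence $\mathbf{x}_q^{k+1}-\mathbf{x}_q^k\to 0$, $\mathbf{x}_0^{k+1}-\mathbf{x}_0^k\to 0$, and (by the first step) $\mathbf{y}^{k+1}-\mathbf{y}^k\to 0$. The $\mathbf{x}_q$ iterates are bounded by construction; the same PSD quadratic form used to bound $L_\rho$ below also forces $\mathbf{x}_0^k$ to be bounded, and then $\mathbf{y}^k$ is bounded by its closed form. A Bolzano--Weierstrass argument together with vanishing successive differences then yields the limit $(\{\mathbf{x}_q^*\},\mathbf{x}_0^*,\mathbf{y}^*)$ with the consensus relation $\mathbf{x}_0^*=\sum_q 2^{q-1}\mathbf{x}_q^*$. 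Finally, writing out the variational inequality characterizing $\mathbf{x}_q^{k+1}$ as the minimizer of $L_\rho$ over the box and passing to the limit collapses the $\rho$-penalty contribution to zero, leaving $\mathrm{Re}\langle -\alpha_q \mathbf{x}_q^* - 2^{q-1}\mathbf{y}^*,\mathbf{x}_q-\mathbf{x}_q^*\rangle\ge 0$; since $\mathbf{y}^*=\mathbf{H}^H(\mathbf{r}-\mathbf{H}\mathbf{x}_0^*)$, this is exactly \eqref{stationary point}.

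The main obstacle I anticipate is the descent inequality, not the limit-passage: because the objective carries the \emph{negative} quadratic $-\tfrac{\alpha_q}{2}\|\mathbf{x}_q\|_2^2$, strong convexity of each $\mathbf{x}_q$-subproblem is delicate and forces $4^{q-1}\rho>\alpha_q$; on top of that, the dual increment $\|\mathbf{y}^{k+1}-\mathbf{y}^k\|^2$ can only be absorbed by the $\mathbf{x}_0$ decrease, and the arithmetic of this absorption is precisely what dictates the constant $\sqrt{2}$ in $\rho>\sqrt{2}\lambda_{\max}(\mathbf{H}^H\mathbf{H})$. Once the descent-plus-lower-bound step is in place, the remainder is a routine compactness and limit-passage argument.
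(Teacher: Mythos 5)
Your proposal is correct and follows essentially the same route as the paper's own proof: your closed form $\mathbf{y}^{k+1}=\mathbf{H}^H\mathbf{r}-\mathbf{H}^H\mathbf{H}\mathbf{x}_0^{k+1}$ is exactly the paper's Lemma~1 identity $\mathbf{y}^{k+1}=-\nabla_{\mathbf{x}_0}\ell(\mathbf{x}_0^{k+1})$, your sufficient-decrease inequality is the paper's Lemma~2 (with the slightly cruder strong-convexity modulus $\rho$ in place of $\rho+\lambda_{\rm min}(\mathbf{H}^H\mathbf{H})$ for the $\mathbf{x}_0$-block, which changes nothing in the $\rho>\sqrt{2}\lambda_{\rm max}(\mathbf{H}^H\mathbf{H})$ arithmetic), and your exact completion-of-squares lower bound is the paper's Lemma~3, obtained there via the Descent Lemma applied to the quadratic $\ell$. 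The concluding steps --- boundedness of iterates, vanishing successive differences, the consensus relation $\mathbf{x}_0^*=\sum_{q}2^{q-1}\mathbf{x}_q^*$, and passage to the limit in the box-constrained variational inequality so that the penalty term collapses and $\mathbf{y}^*=\mathbf{H}^H(\mathbf{r}-\mathbf{H}\mathbf{x}_0^*)$ turns the limit inequality into \eqref{stationary point} --- likewise mirror the paper's Appendix~B argument.
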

where
\begin{equation}\label{loss}
\ell\big(\sum_{q=1}^{Q}2^{q-1} \mathbf{x}_q^*\big)=\frac{1}{2}\Vert\mathbf{r}-\mathbf{H}(\sum_{q=1}^{Q}2^{q-1} \mathbf{x}_q^*) \Vert_2 ^{2}.
\end{equation}

{\it Remarks:}
   Theorem \ref{thm:convergence} indicates that the proposed PS-ADMM algorithm is theoretically-guaranteed converged to some stationary point of the model \eqref{eq:PS_ML} under some wild conditions. Here, we should note that these conditions are easily satisfied since the values of penalty parameters $\rho$ and $\{\alpha_q\}_{q=1}^{Q}$ can be set accordingly when the channel matrix $\mathbf{H}$ is known. The key idea of proving Theorem \ref{thm:convergence} is to find out that the potential function $L_{\rho}(\{\mathbf{x}_q\}_{q=1}^{Q}, \mathbf{x}_0, \mathbf{y})$ {\it decreases sufficiently} in every PS-ADMM iteration and is lower-bounded. To reach this goal, we first prove several related lemmas in Appendix \ref{lemma1-3}. Then, we provide detailed proof of Theorem \ref{thm:convergence} in Appendix \ref{PS-ADMM Proof}.

\subsection{Iteration complexity}

We use the following residual
\begin{equation}\label{residual_k}
\displaystyle\sum_{q=1}^{Q}\|\mathbf{x}_q^{k+1}-\mathbf{x}_{q}^{k}\|_2^2 + \|\mathbf{x}_0^{k+1}-\mathbf{x}_0^{k}\|_2^2
\end{equation}
 to measure the convergence progress of the PS-ADMM algorithm since it converges to zero as $k\rightarrow+\infty$. Then, we have Theorem \ref{iteration complexity} about its convergence progress. The detailed proof is shown in Appendix \ref{Iteration complexity Proof}.

\begin{theorem}\label{iteration complexity}
  Let $t$ be the minimum iteration index such that the residual in \eqref{residual_k} is less than $\epsilon$, where $\epsilon$ is the desired precise parameter for the solution. Then, we have the following iteration complexity result
 \[
   \begin{split}
     t \!\leq \!\frac{1}{C\epsilon}\bigg(\!L_{\rho}(\{\mathbf{x}^{1}_q\}_{q=1}^{Q}, \mathbf{x}_0^{1}, \mathbf{y}^{1}) \!-\! \Big(\ell\left(\mathbf{x}_0^*\right) \!-\! \sum_{q=1}^{Q}\frac{\alpha_q}{2} \Vert\mathbf{x}_q^* \Vert_2 ^{2}\Big)\!\bigg),
   \end{split}
 \]
where the constant \[ C\!=\!\min\!\left\{\!\{ \frac{\gamma_q(\rho)}{2}\}_{q=1}^{Q}, \Big(\frac{\gamma_0(\rho)}{2}\!\!-\!\!\frac{\lambda_{\rm max}^2(\mathbf{H}^{H} \mathbf{H})}{\rho}\Big)\! \right\},\]\\
and $\gamma_q(\rho)=4^{q-1}\rho-\alpha_q$, $\gamma_0(\rho)=\rho + \lambda_{\rm min}(\mathbf{H}^{H} \mathbf{H})$.
\end{theorem}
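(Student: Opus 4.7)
The plan is to reduce Theorem~\ref{iteration complexity} to a \emph{sufficient decrease} property of the augmented Lagrangian that already underlies the convergence proof of Theorem~\ref{thm:convergence}, and then to telescope. Specifically, I would establish a per-iteration inequality of the form
\begin{equation*}
L_\rho^{k} - L_\rho^{k+1} \;\geq\; C\,R_k, \quad R_k := \sum_{q=1}^{Q}\|\mathbf{x}_q^{k+1}-\mathbf{x}_q^{k}\|_2^2 + \|\mathbf{x}_0^{k+1}-\mathbf{x}_0^{k}\|_2^2,
\end{equation*}
with $C$ exactly the constant in the theorem statement, sum from $k=1$ to $t$, invoke a lower bound on $L_\rho$, and finally exploit the minimality of $t$.

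The per-block constants come from strong convexity. Under $4^{q-1}\rho>\alpha_q$, the $\mathbf{x}_q$-subproblem is strongly convex with modulus $\gamma_q(\rho)=4^{q-1}\rho-\alpha_q$, so its exact minimization yields a decrement of at least $\frac{\gamma_q(\rho)}{2}\|\mathbf{x}_q^{k+1}-\mathbf{x}_q^{k}\|_2^2$; similarly the $\mathbf{x}_0$-subproblem is strongly convex with modulus $\gamma_0(\rho)=\rho+\lambda_{\min}(\mathbf{H}^{H}\mathbf{H})$. The dual step $\mathbf{y}^{k+1}=\mathbf{y}^{k}+\rho(\mathbf{x}_0^{k+1}-\sum_{q}2^{q-1}\mathbf{x}_q^{k+1})$ instead \emph{increases} $L_\rho$ by exactly $\|\mathbf{y}^{k+1}-\mathbf{y}^{k}\|_2^2/\rho$, but the optimality condition of the $\mathbf{x}_0$-subproblem gives $\mathbf{y}^{k+1}=\mathbf{H}^{H}(\mathbf{r}-\mathbf{H}\mathbf{x}_0^{k+1})$, hence $\mathbf{y}^{k+1}-\mathbf{y}^{k}=-\mathbf{H}^{H}\mathbf{H}(\mathbf{x}_0^{k+1}-\mathbf{x}_0^{k})$, and this ascent is bounded by $\bigl(\lambda_{\max}^2(\mathbf{H}^{H}\mathbf{H})/\rho\bigr)\|\mathbf{x}_0^{k+1}-\mathbf{x}_0^{k}\|_2^2$. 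Absorbing this into the $\mathbf{x}_0$-block decrement leaves the net coefficient $\gamma_0(\rho)/2-\lambda_{\max}^2(\mathbf{H}^{H}\mathbf{H})/\rho$ in front of $\|\mathbf{x}_0^{k+1}-\mathbf{x}_0^{k}\|_2^2$, and taking the minimum of all block coefficients produces $C$. These bounds should already be available from the lemmas of Appendix~\ref{lemma1-3}, so I would cite them rather than reprove them. Telescoping then gives $C\sum_{k=1}^{t}R_k \leq L_\rho^{1}-L_\rho^{t+1}$, and since Theorem~\ref{thm:convergence} shows that $\{L_\rho^{k}\}$ is non-increasing and converges to $\ell(\mathbf{x}_0^*)-\sum_{q}(\alpha_q/2)\|\mathbf{x}_q^*\|_2^2$ (the inner-product and quadratic penalty terms vanish at the feasible stationary point), the right-hand side is upper bounded by $L_\rho(\{\mathbf{x}^{1}_q\}_{q=1}^{Q}, \mathbf{x}_0^{1}, \mathbf{y}^{1})-\bigl(\ell(\mathbf{x}_0^*)-\sum_{q}(\alpha_q/2)\|\mathbf{x}_q^*\|_2^2\bigr)$.

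The final step is immediate: by the defining property of $t$, every residual $R_k$ for $k<t$ exceeds $\epsilon$, so $\sum_{k=1}^{t}R_k \geq t\epsilon$, and rearranging yields the stated bound. Assuming the convergence machinery of Theorem~\ref{thm:convergence} is in hand, the main obstacle is purely bookkeeping: one must confirm that the dual-ascent contribution is charged only once, against the $\mathbf{x}_0$-block decrease and not the $\mathbf{x}_q$-block decreases, and must verify that the hypothesis $\rho>\sqrt{2}\,\lambda_{\max}(\mathbf{H}^{H}\mathbf{H})$ is exactly what guarantees $\gamma_0(\rho)/2-\lambda_{\max}^2(\mathbf{H}^{H}\mathbf{H})/\rho>0$ and hence $C>0$ (indeed $\rho^2>2\lambda_{\max}^2$ implies $\rho/2>\lambda_{\max}^2/\rho$, and $\gamma_0(\rho)/2\geq\rho/2$ because $\lambda_{\min}(\mathbf{H}^{H}\mathbf{H})\geq 0$).
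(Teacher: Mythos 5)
Your proposal is correct and follows essentially the same route as the paper: it invokes the sufficient-decrease inequality of Lemma \ref{lemma:L_difference} (whose per-block constants you correctly trace to the strong-convexity moduli $\gamma_q(\rho)$, $\gamma_0(\rho)$ and the dual-ascent term $\|\mathbf{y}^{k+1}-\mathbf{y}^k\|_2^2/\rho$ charged against the $\mathbf{x}_0$ block), telescopes, lower-bounds $L_\rho$ by its limit value $\ell(\mathbf{x}_0^*)-\sum_q\frac{\alpha_q}{2}\|\mathbf{x}_q^*\|_2^2$ at the feasible limit point, and finishes with the minimality of $t$. The only blemish, an off-by-one in counting residuals exceeding $\epsilon$ ($\sum_{k=1}^{t}R_k\geq (t-1)\epsilon$ rather than $t\epsilon$), is present in the paper's own proof as well, so your argument is faithful to it.
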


\subsection{Computational cost}\label{complexity-sec}

  The overall {computational complexity\footnotemark} of the PS-ADMM detection algorithm consists of two parts: the first part, which is independent of the number of iterations, is required to compute the $\mathbf{x}_0 $ update of PS-ADMM in \eqref{eq: solution X0}; then, it needs to be calculated only once when detecting each transmitted symbol vector. The first part of the calculations is performed in three steps: first, the multiplication of the $U\times B$ matrix $\mathbf{H}^{H}$ by the $B\times U$ matrix $\mathbf{H}$; second, the computation of the inversion of the regularized Gramian matrix $\mathbf{H}^{H} \mathbf{H}+\rho\mathbf{I}$; and third, the computation of the  $U\times B$ matrix $\mathbf{H}^{H}$ by the $B\times 1$ vector $\mathbf{y}$ to obtain the matched-filter vector $\mathbf{H}^{H}\mathbf{y}$. These steps require $\frac{1}{2}BU^2$, $\frac{1}{3}U^3$ and $BU$ complex multiplications, respectively.
  The second part, which is iteration dependent, needs to be repeated every iteration in two steps: first, the ${Q}$ scalar multiplications by the $U$ vectors in \eqref{eq: solution_Xq}; second, a multiplication of the $U\times U$ matrix by the $U\times 1$ vector and the ${Q}$ scalar multiplications by the $U\times 1$ vectors in \eqref{eq: solution X0}. These steps require $\frac{1}{2}QU$, $U^2+\frac{1}{2}QU$ complex multiplications, respectively. Combining this result with Theorem \ref{iteration complexity}, we conclude that the total computational cost to attain an $\epsilon$-optimal solution is $\frac{1}{3}U^3+\frac{1}{2}BU^2+BU+K(U^2+QU)$, where the number of iterations $K=\frac{1}{C\epsilon}\Big(L_{\rho}(\{\mathbf{x}^{1}_q\}_{q=1}^{Q}, \mathbf{x}_0^{1}, \mathbf{y}^{1}) - \big(\frac{1}{2}\Vert\mathbf{r}-\mathbf{H}\mathbf{x}_0^* \Vert_2 ^{2} - \sum_{q=1}^{Q}\frac{\alpha_q}{2} \Vert\mathbf{x}_q^* \Vert_2 ^{2}\big)\!\Big)$. Since $K \ll B$ for massive MIMO detection, the computational complexity of the PS-ADMM mainly lies in matrix multiplication and inversion computations, which is comparable to that of the linear detector.

\footnotetext{The computational complexity is measured by the number of complex-valued multiplications for $K$ iterations. It should be noted that a complex-valued multiplication can be implemented via three real-valued multiplications.}

\begin{figure*}[htpb]
\subfigure[$B=128,U=16$ for 4-QAM;\ $\alpha_1=80,\rho=120$.]{
    \begin{minipage}{8.5cm}
    \centering
        \includegraphics[width=3.5in,height=2.7in]{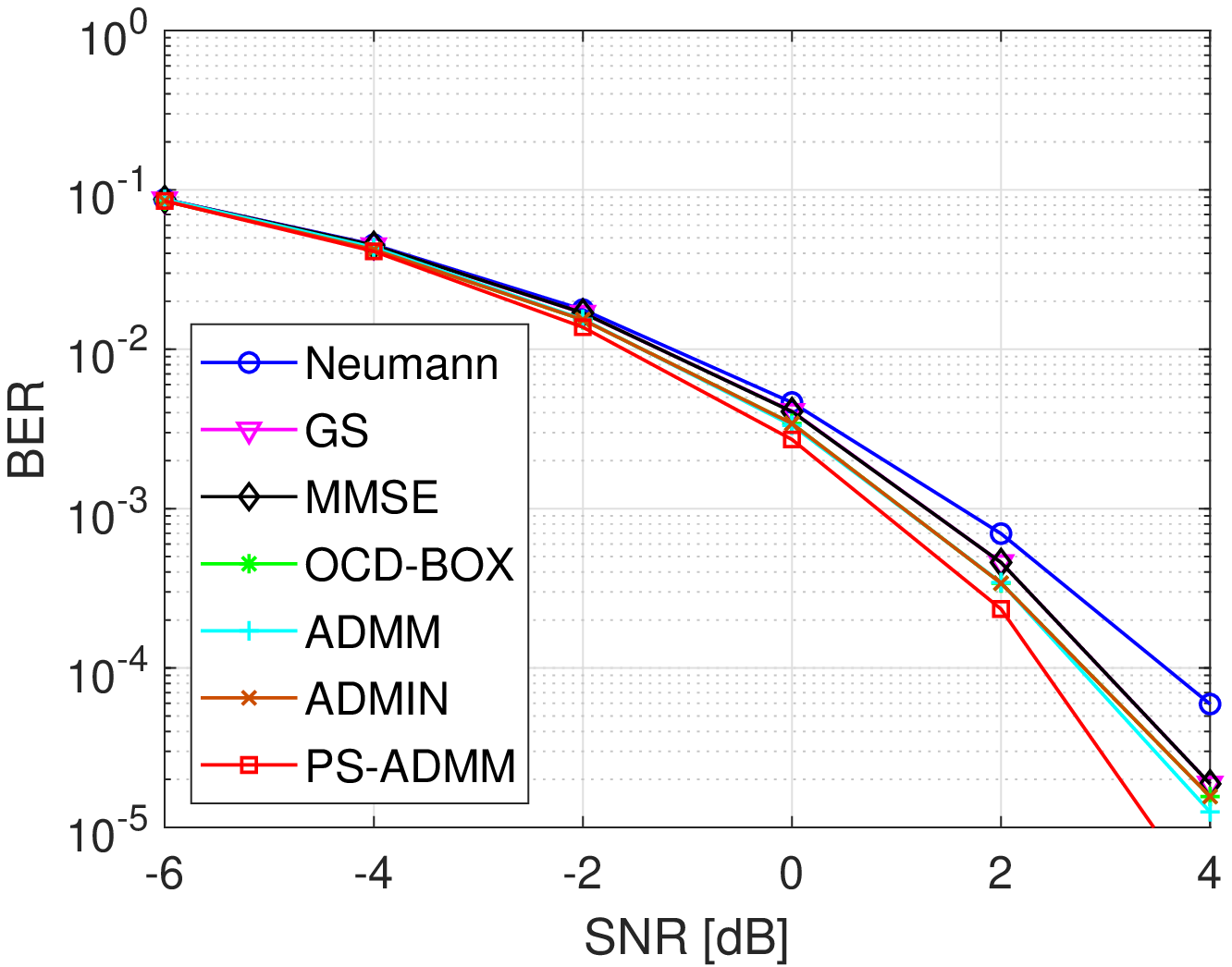}
            \label{ber-all-QPSK-128x16}
    \end{minipage}
    }
 \subfigure[$B=128,U=32$ for 4-QAM;\ $\alpha_1=80,\rho=120$.]{
    \begin{minipage}{8.5cm}
    \centering
        \includegraphics[width=3.5in,height=2.7in]{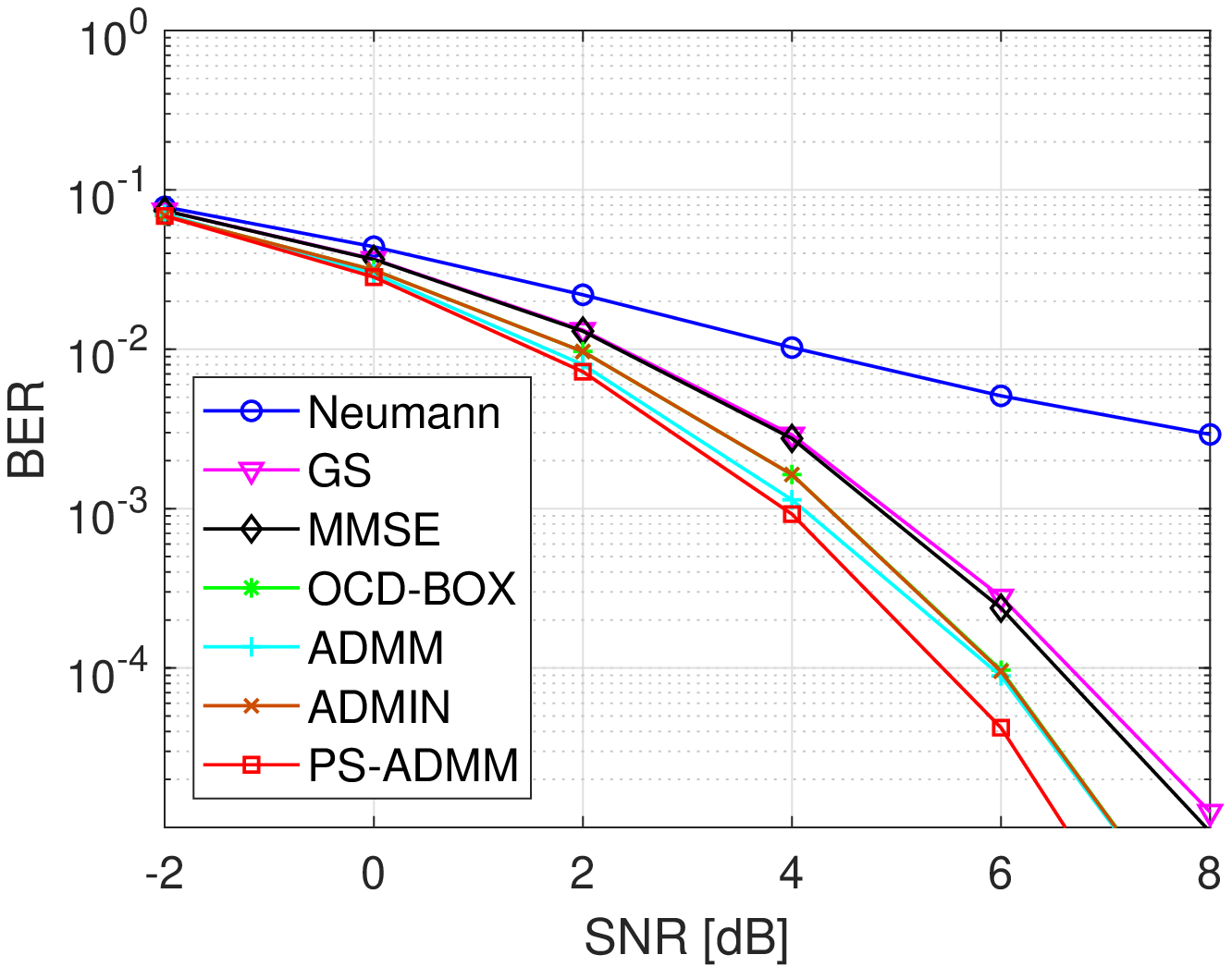}
            \label{ber-all-QPSK-128x32}
    \end{minipage}
    }

\subfigure[$B=128,U=64$ for 4-QAM;\ $\alpha_1=80,\rho=120$.]{
    \begin{minipage}{8.5cm}
    \centering
        \includegraphics[width=3.5in,height=2.7in]{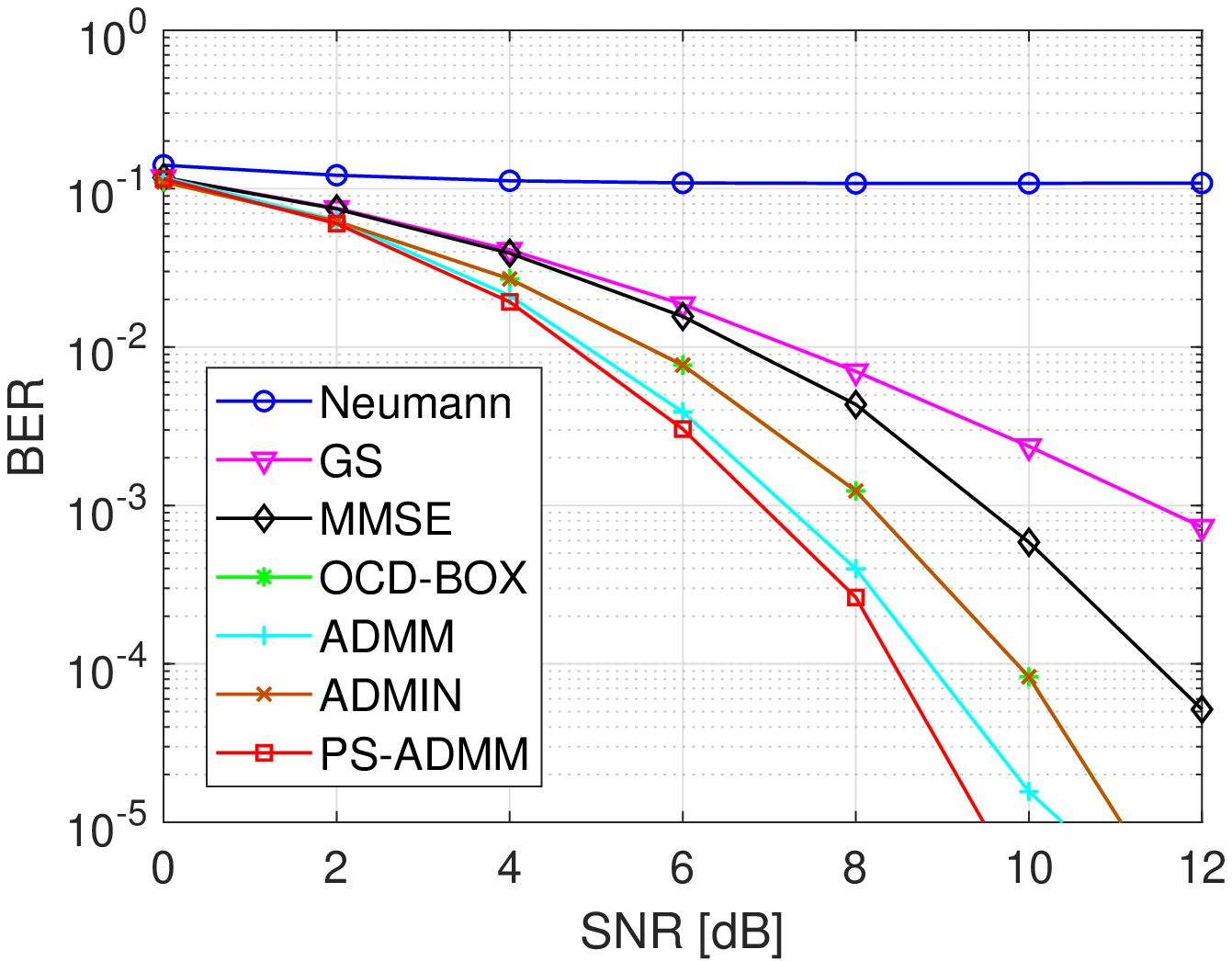}
            \label{ber-all-QPSK-128x64}
    \end{minipage}
    }
 \subfigure[$B=128,U=128$ for 4-QAM;\ $\alpha_1=80,\rho=120$.]{
    \begin{minipage}{8.5cm}
    \centering
        \includegraphics[width=3.5in,height=2.7in]{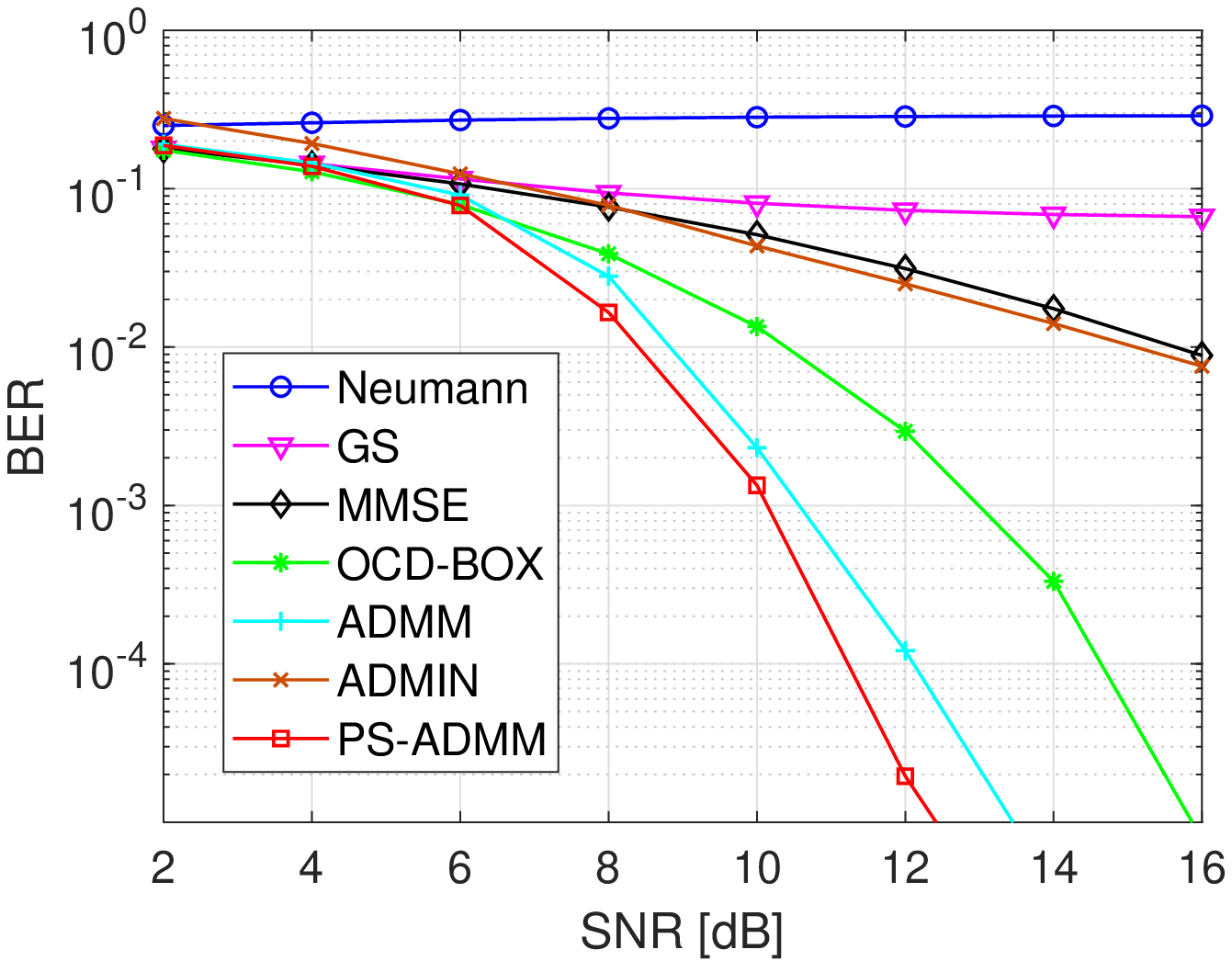}
            \label{ber-all-QPSK-128x128}
    \end{minipage}
    }
 \subfigure[$B=128,U=16$ for 16-QAM;\ $\alpha_1=8,\alpha_2=30,\rho=16$.]{
    \begin{minipage}{8.5cm}
    \centering
        \includegraphics[width=3.5in,height=2.7in]{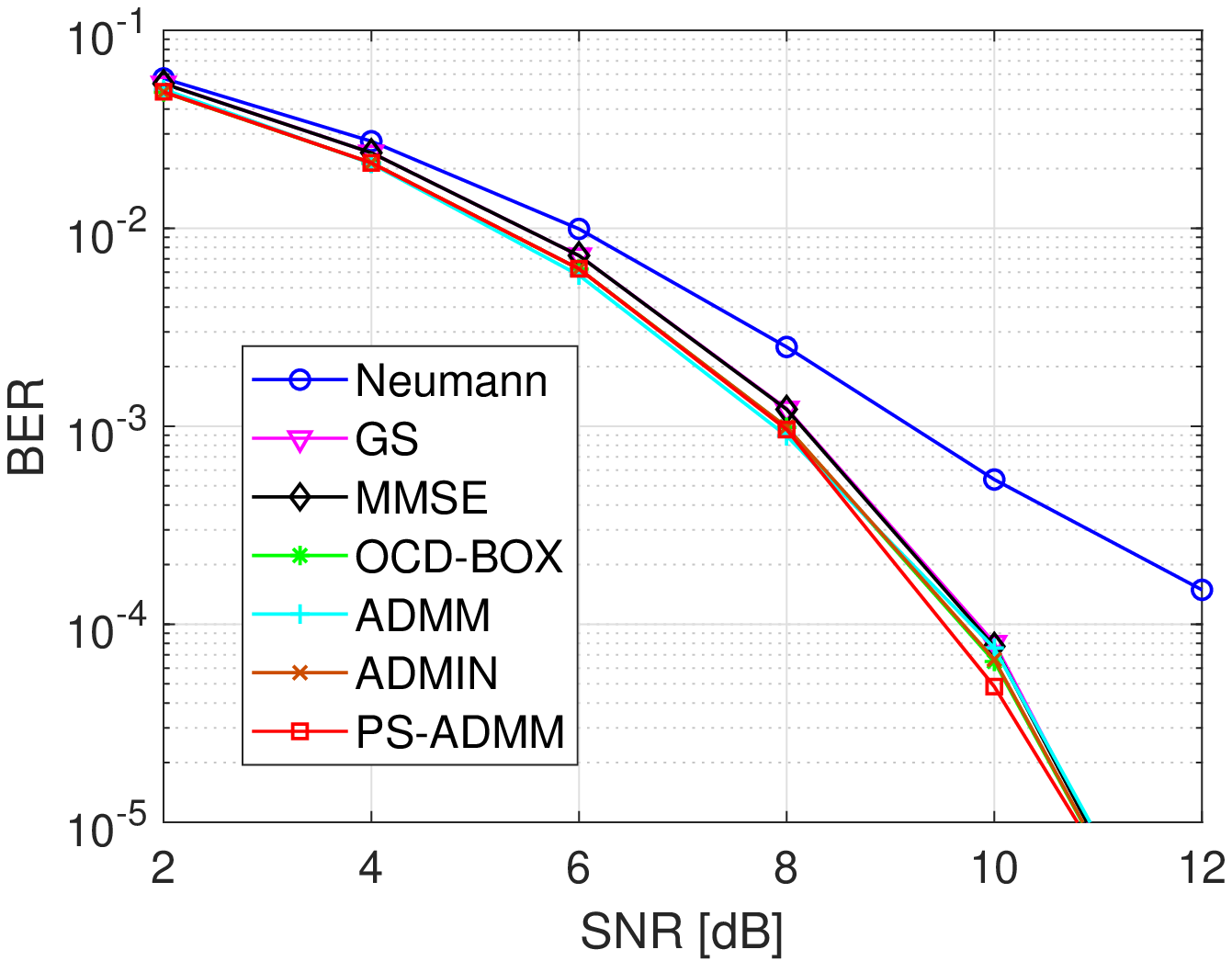}
            \label{ber-all-16qam-128x16}
    \end{minipage}
    }
 \subfigure[$B=128,U=32$ for 16-QAM;\ $\alpha_1=9,\alpha_2=40,\rho=20$.]{
    \begin{minipage}{8.5cm}
    \centering
        \includegraphics[width=3.5in,height=2.7in]{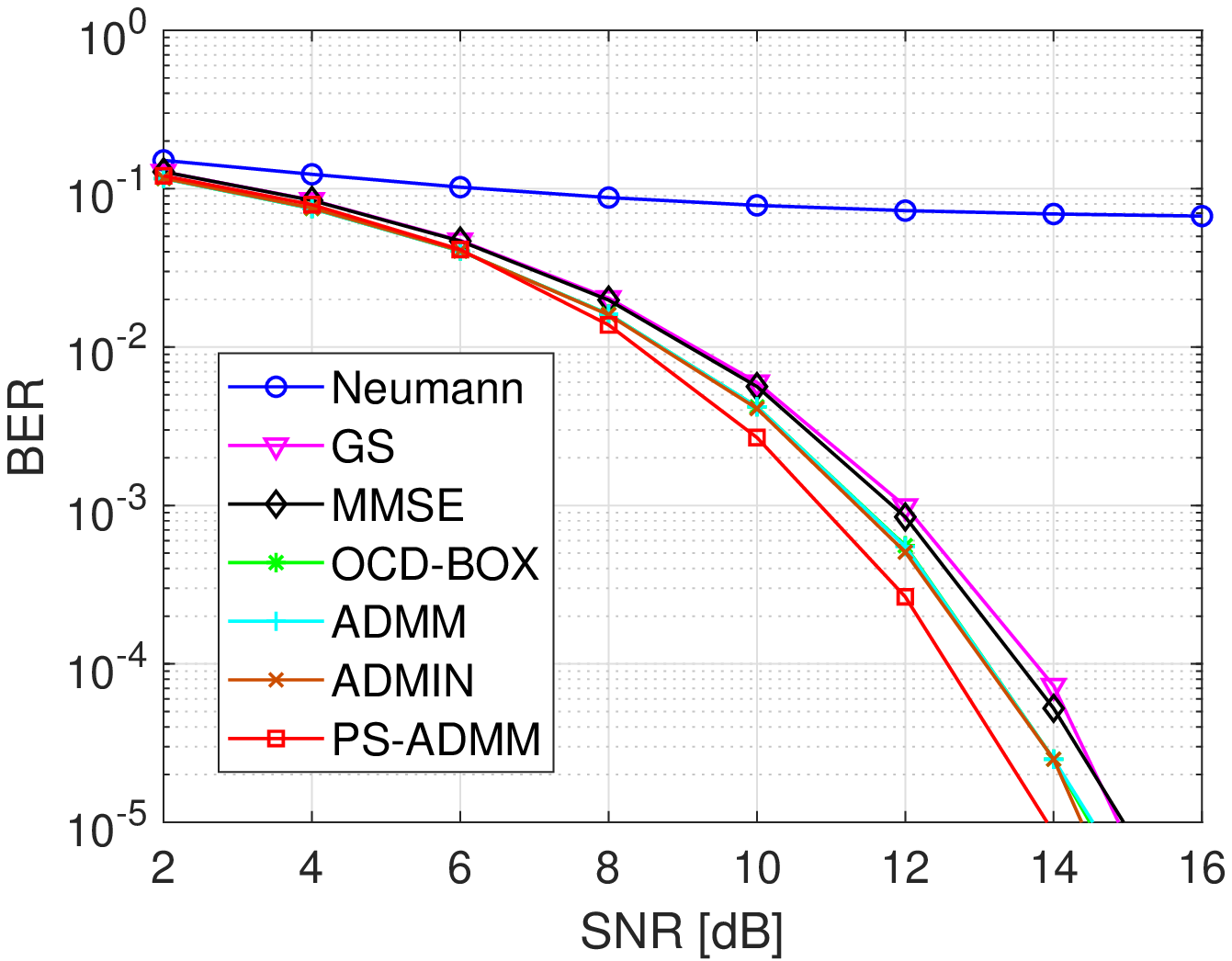}
            \label{ber-all-16qam-128x32}
    \end{minipage}
    }

\end{figure*}
\addtocounter{figure}{-1}
\begin{figure*}
\addtocounter{figure}{1}

 \subfigure[$B=128,U=64$ for 16-QAM;\ $\alpha_1=12,\alpha_2=64,\rho=20$.]{
    \begin{minipage}{8.5cm}
    \centering
        \includegraphics[width=3.5in,height=2.7in]{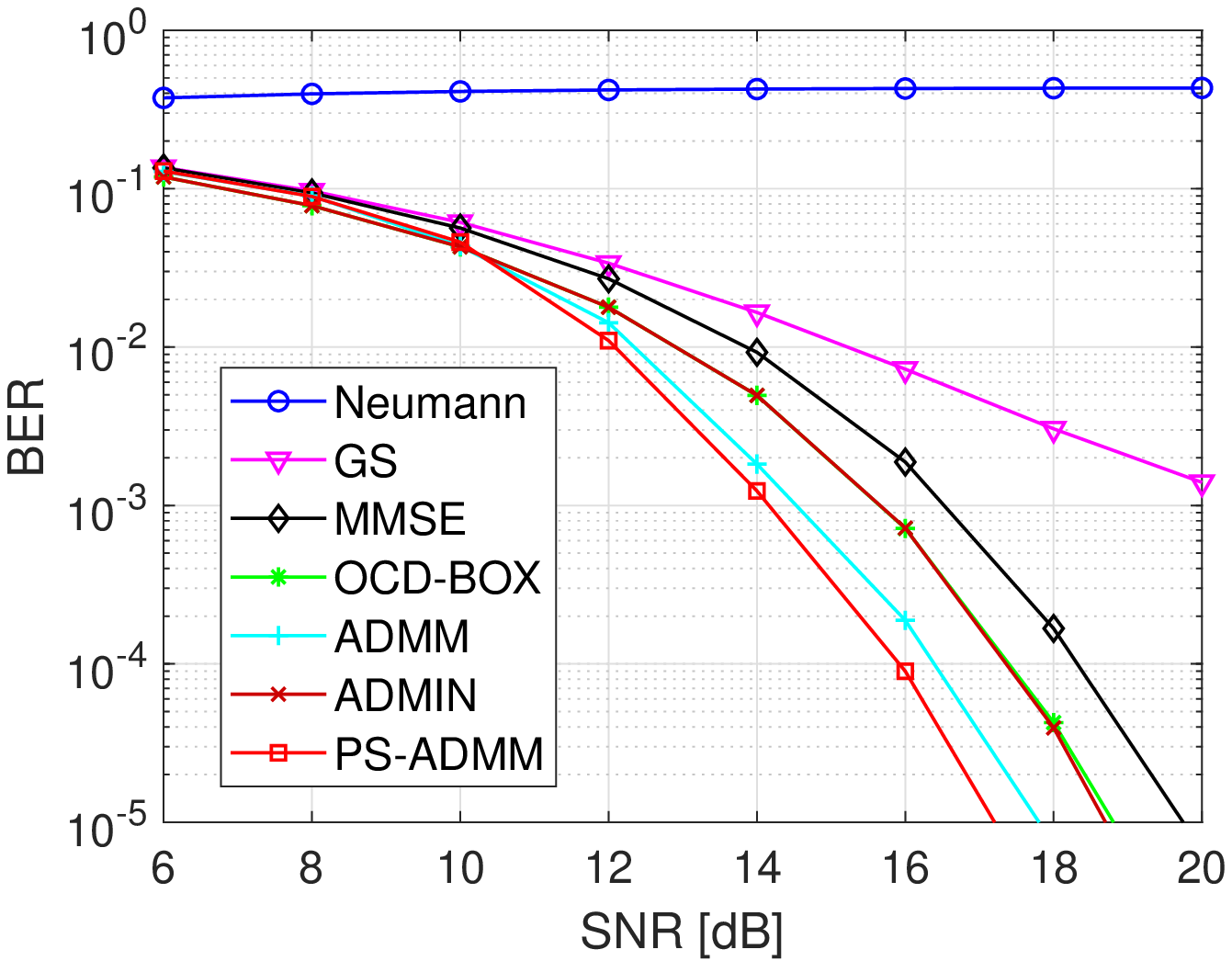}
            \label{ber-all-16qam-128x64}
    \end{minipage}
    }
 \subfigure[$B=128,U=128$ for 16-QAM;\ $\alpha_1=10,\alpha_2=60,\rho=16$.]{
    \begin{minipage}{8.5cm}
    \centering
        \includegraphics[width=3.5in,height=2.7in]{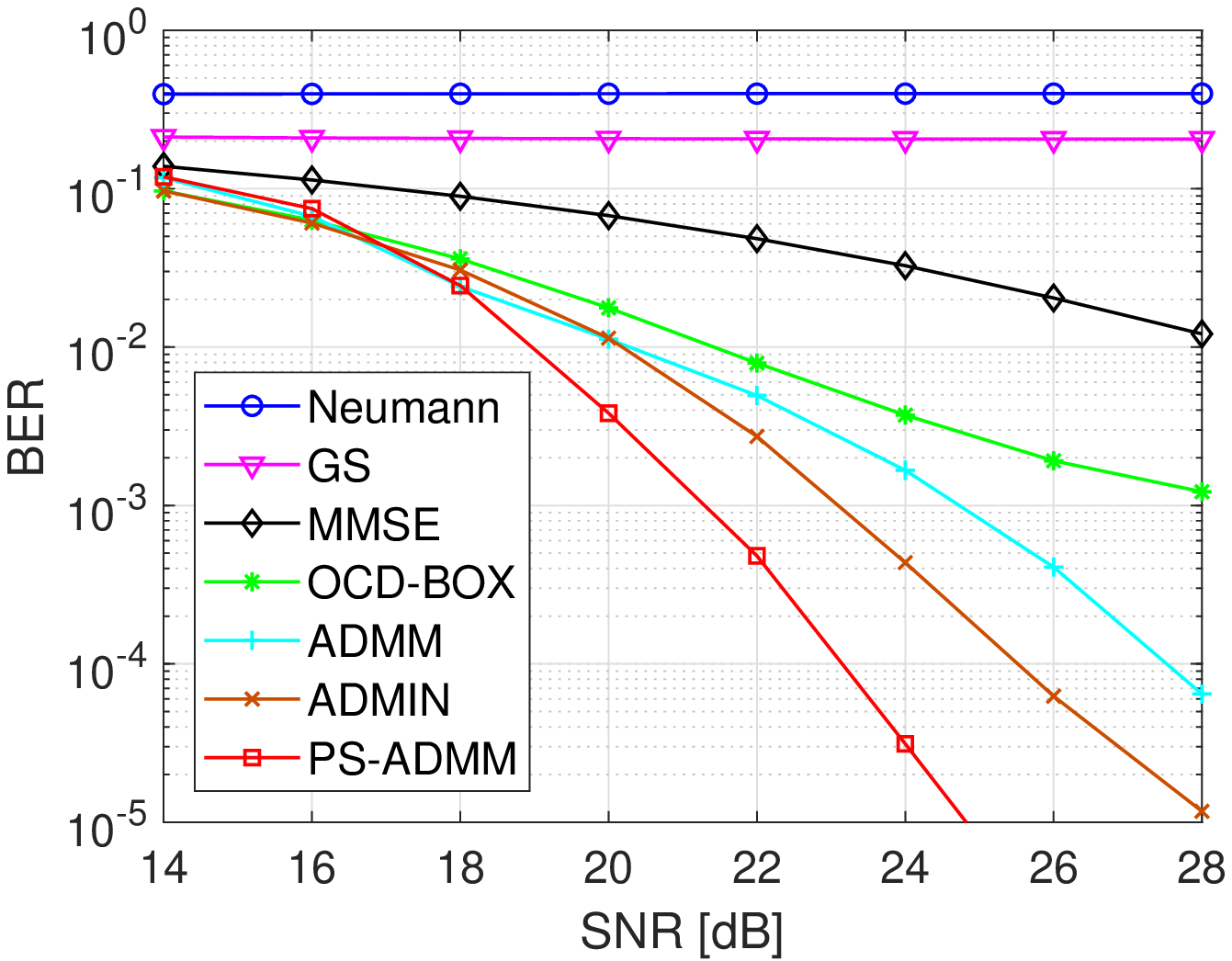}
            \label{ber-all-16qam-128x128}
    \end{minipage}
    }
\subfigure[$B=128,U=16$ for 64-QAM;\ $\alpha_1=22$,$\alpha_2=17$,$\alpha_3=95$,$\rho=96$.]{
    \begin{minipage}{8.5cm}
    \centering
        \includegraphics[width=3.5in,height=2.7in]{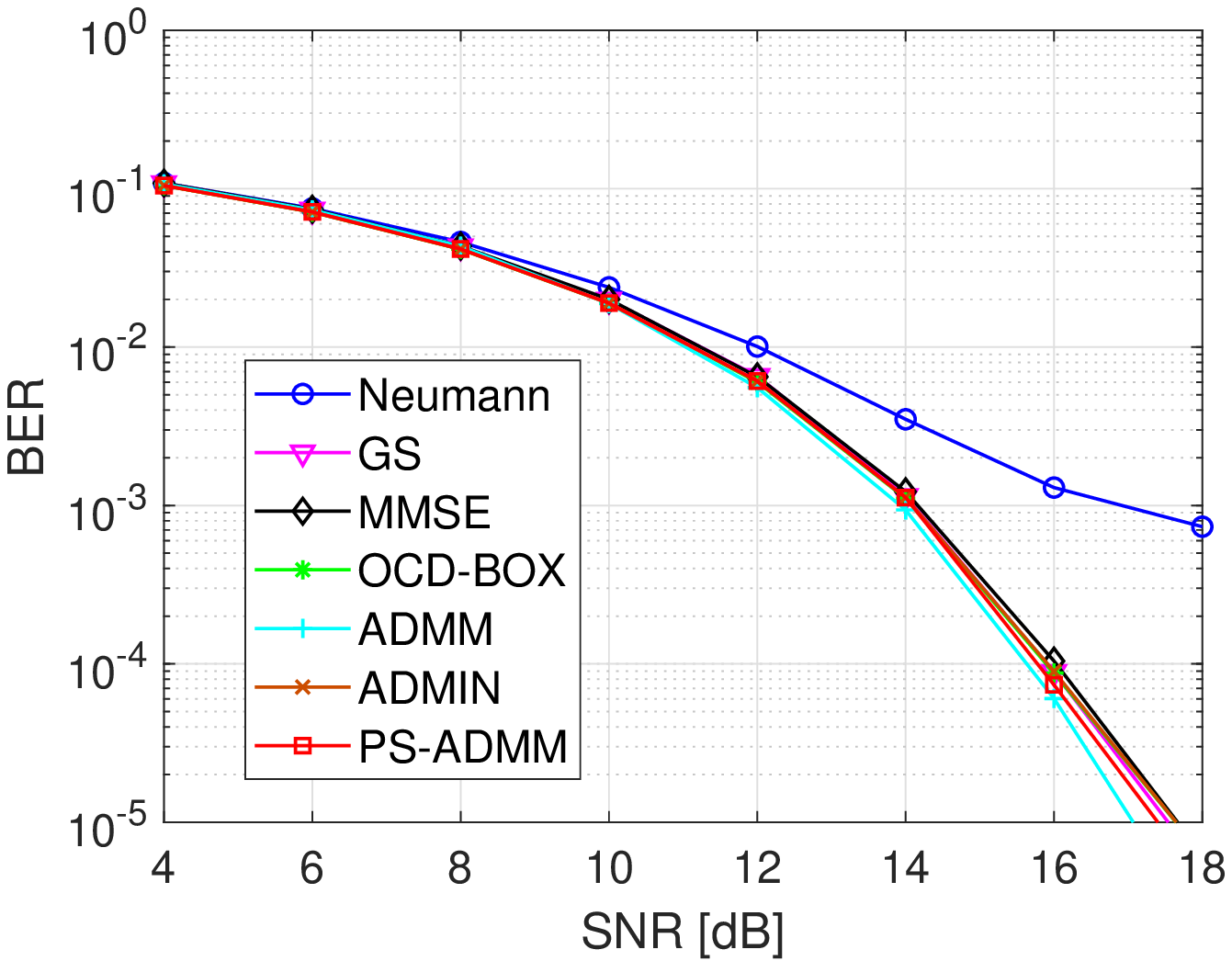}
            \label{ber-all-64qam-128x16}
    \end{minipage}
    }
\subfigure[$B=128,U=32$ for 64-QAM;\ $\alpha_1=2$,$\alpha_2=2$,$\alpha_3=10.5$,$\rho=9$.]{
    \begin{minipage}{8.5cm}
    \centering
        \includegraphics[width=3.5in,height=2.7in]{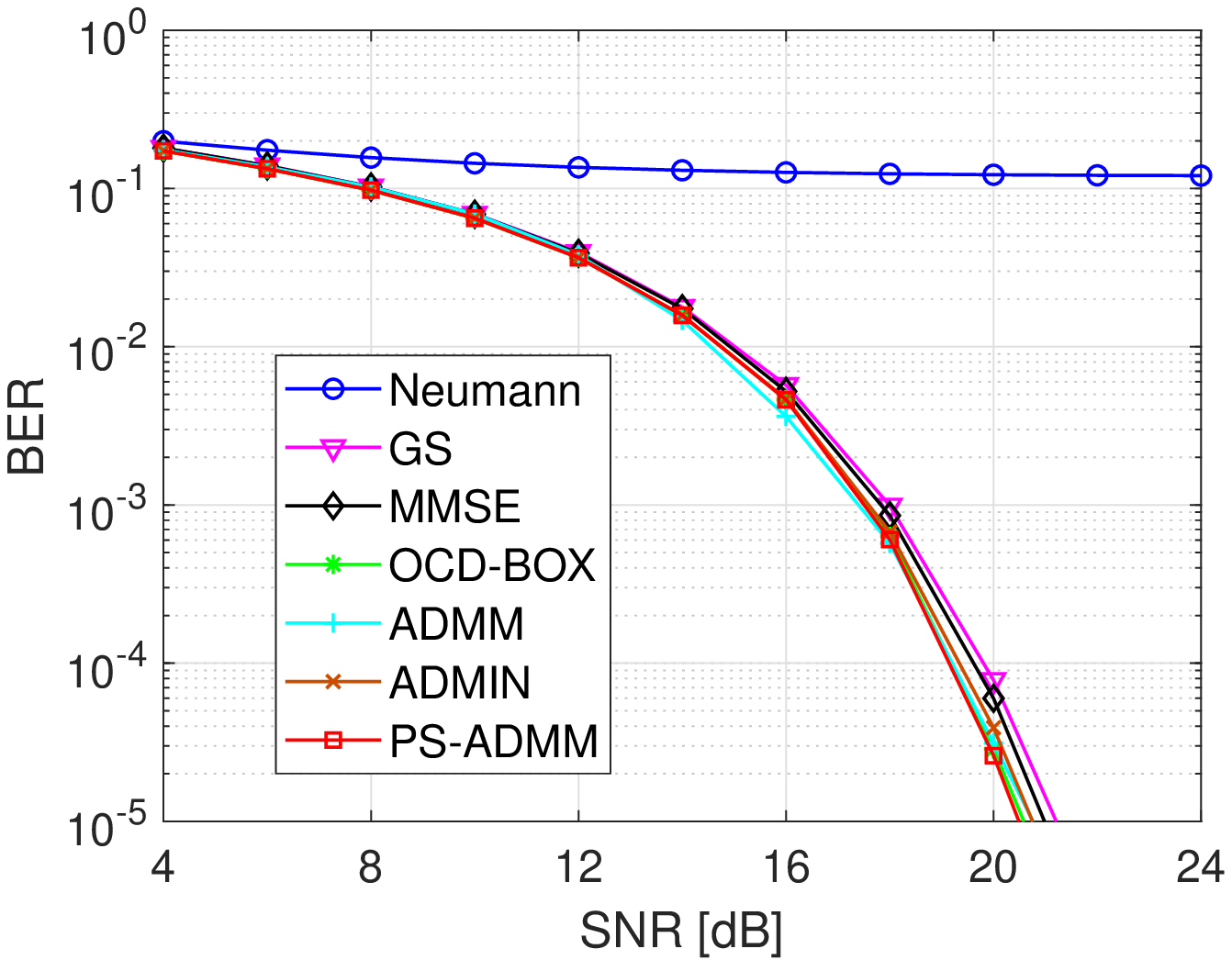}
            \label{ber-all-64qam-128x32}
    \end{minipage}
    }
\subfigure[$B=128,U=64$ for 64-QAM;\ $\alpha_1=22$, $\alpha_2=22.5$, $\alpha_3=85$, $\rho=44$.]{
    \begin{minipage}{8.5cm}
    \centering
        \includegraphics[width=3.5in,height=2.7in]{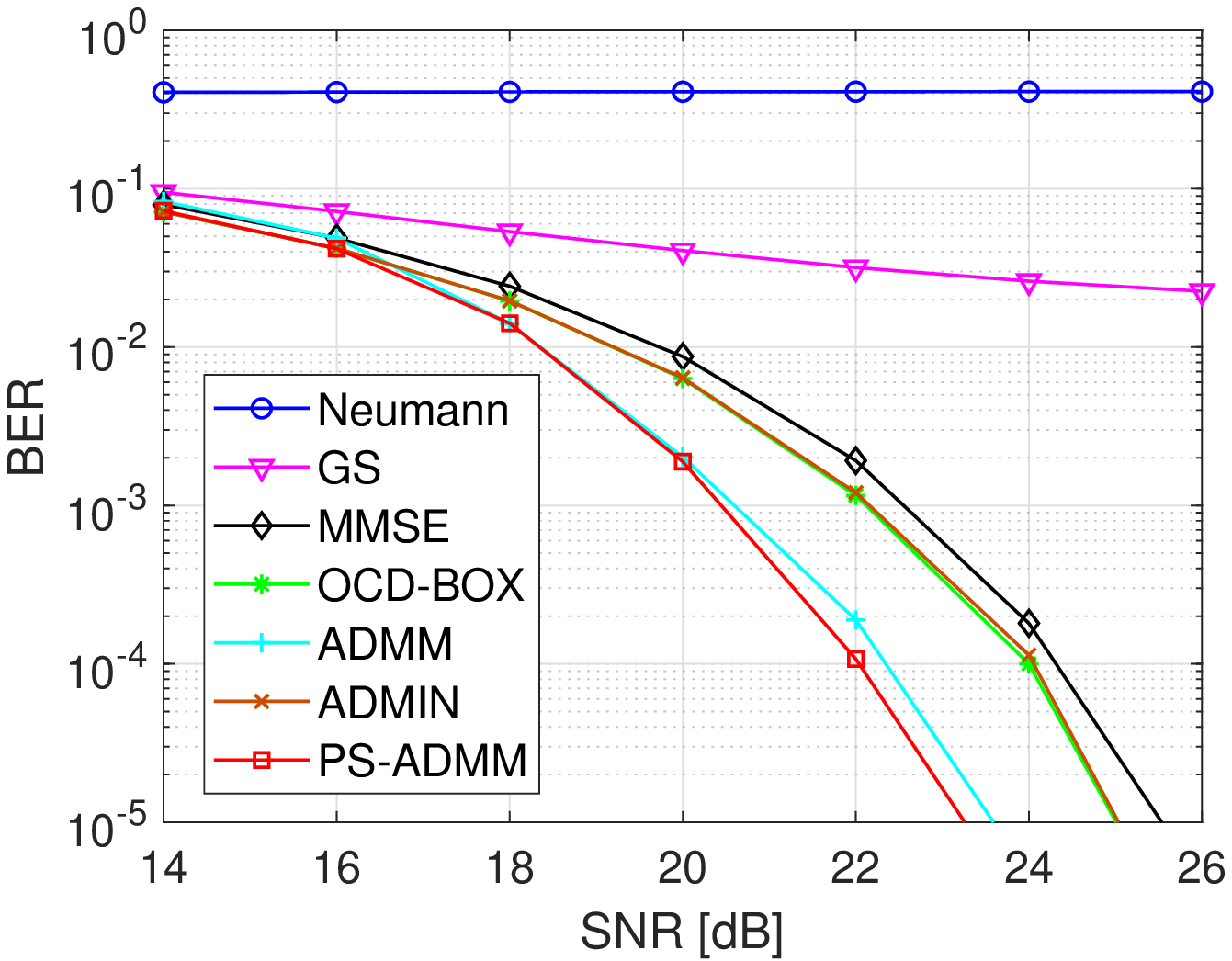}
            \label{ber-all-64qam-128x64}
    \end{minipage}
    }
  \subfigure[$B=128,U=128$ for 64-QAM;\ $\alpha_1=2.75$, $\alpha_2=2.25$, $\alpha_3=10.5$, $\rho=5$.]{
    \begin{minipage}{8.5cm}
    \centering
        \includegraphics[width=3.5in,height=2.7in]{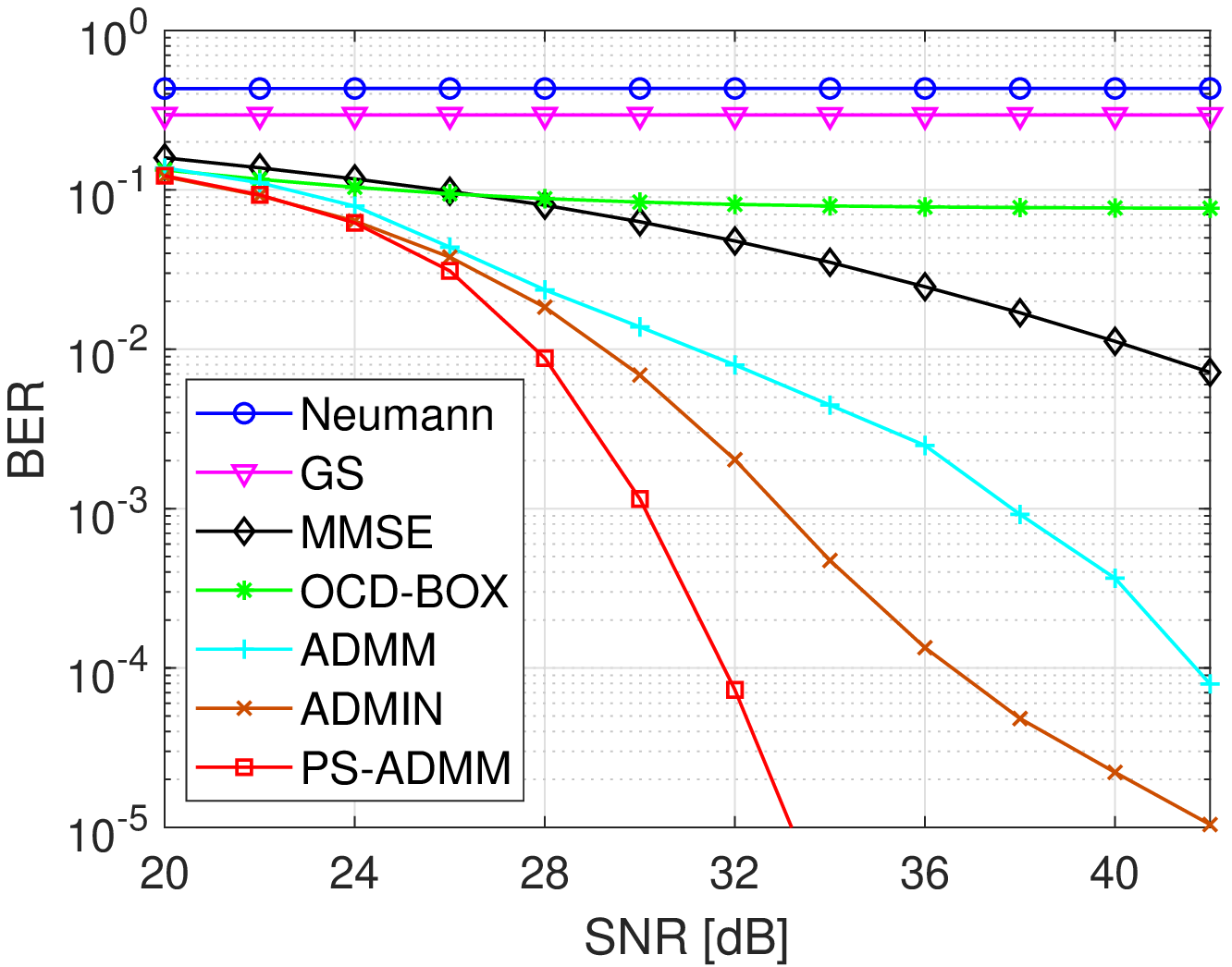}
            \label{ber-all-64qam-128x128}
    \end{minipage}
    }
    \centering
    \caption{Comparisons of BER performance using various massive MIMO detectors.}
    \label{ber_allalgo}
 \end{figure*}

\section{Simulation results}\label{sec:Simulation results}

In this section, numerical results are presented to show the effectiveness of the proposed PS-ADMM detector.
Specifically, in Section \ref{simulation-result-performance} we demonstrate the BER performance of the PS-ADMM detector compared with several state-of-the-art detectors.
In Section \ref{simulation-parameter-setting}, we focus on analyzing the impact of parameters $\rho$, $\{\alpha_q\}_{q=1}^{Q}$, and $K$ on the performance of the PS-ADMM detector.

Throughout this section, we show simulation results for uncoded and hard decision-based signal detection with the i.i.d. Rayleigh fading channel in different $B \times U \;(B \ge U)$ multiuser massive MIMO systems. The modulation schemes of 4-QAM (QPSK), 16-QAM, and 64-QAM are considered. We assume that perfect knowledge of the channel state information is known at the receiver side. For a fair comparison, all the algorithms are implemented using Matlab 2019a/Windows 7 environment on a computer with 3.7GHz Intel i3-6100$\times$2 CPU and 16GB RAM.

\subsection{BER performance}\label{simulation-result-performance}
In this subsection, the BER performance of the proposed PS-ADMM detector was evaluated and compared with conventional and state-of-the-art MIMO detectors by numerical simulations, which are the classical MMSE detector, Neumann detector \cite{wu2014large}, GS detector \cite{wu2016efficient}, OCD-BOX detector \cite{wu2016high}, and two ADMM-based detectors ADMM and ADMIN in \cite{7526551} and \cite{shahabuddin2021admm} respectively. The termination criterium is that iteration number $K$ reaches $30$ or the residual in \eqref{residual_k} is less than $10^{-5}$. The data points plotted in all BER curves are averaged over 1000 Monte-Carlo trials.

 Fig.\,\ref{ber_allalgo} shows BER performance of considered detectors for 4-QAM, 16-QAM, and 64-QAM modulation with a different number of antennas in the transmitter of the massive MIMO systems.
 In Fig.\,\ref{ber_allalgo}, one can see that BER curves of all detectors have a similar changing trend at low SNRs that continues to drop in a waterfall manner in relatively high SNR regions. We observe that the PS-ADMM detector achieves the best BER performance.
 Observing Fig.\,\ref{ber-all-QPSK-128x16}-\ref{ber-all-QPSK-128x64}, \ref{ber-all-16qam-128x16}-\ref{ber-all-16qam-128x64}, and \ref{ber-all-64qam-128x16}-\ref{ber-all-64qam-128x64}, we can find that all of the detectors have comparable BER performance when the BS-to-user-antenna ratio is more than two; only the approximate matrix inversion algorithms such as Neumann and GS suffer from performance loss.
 In Fig.\,\ref{ber-all-QPSK-128x128},\,\ref{ber-all-16qam-128x128}, and \,\ref{ber-all-64qam-128x128}, for the more challenging $128\times128$ massive MIMO systems, one can also see that the PS-ADMM detector outperforms other detectors. From the presented simulation results, we can see clearly that the proposed PS-ADMM detector can achieve better BER performance than the-state-of-the-art ones and the gap increases when the ratio of the BS's antenna number to the user number approaches one.

\begin{figure}[tp]
   \subfigure[BER performance vs. $\rho$.]{
    \begin{minipage}{9cm}
    \centering
        \includegraphics[width=3.5in,height=2.7in]{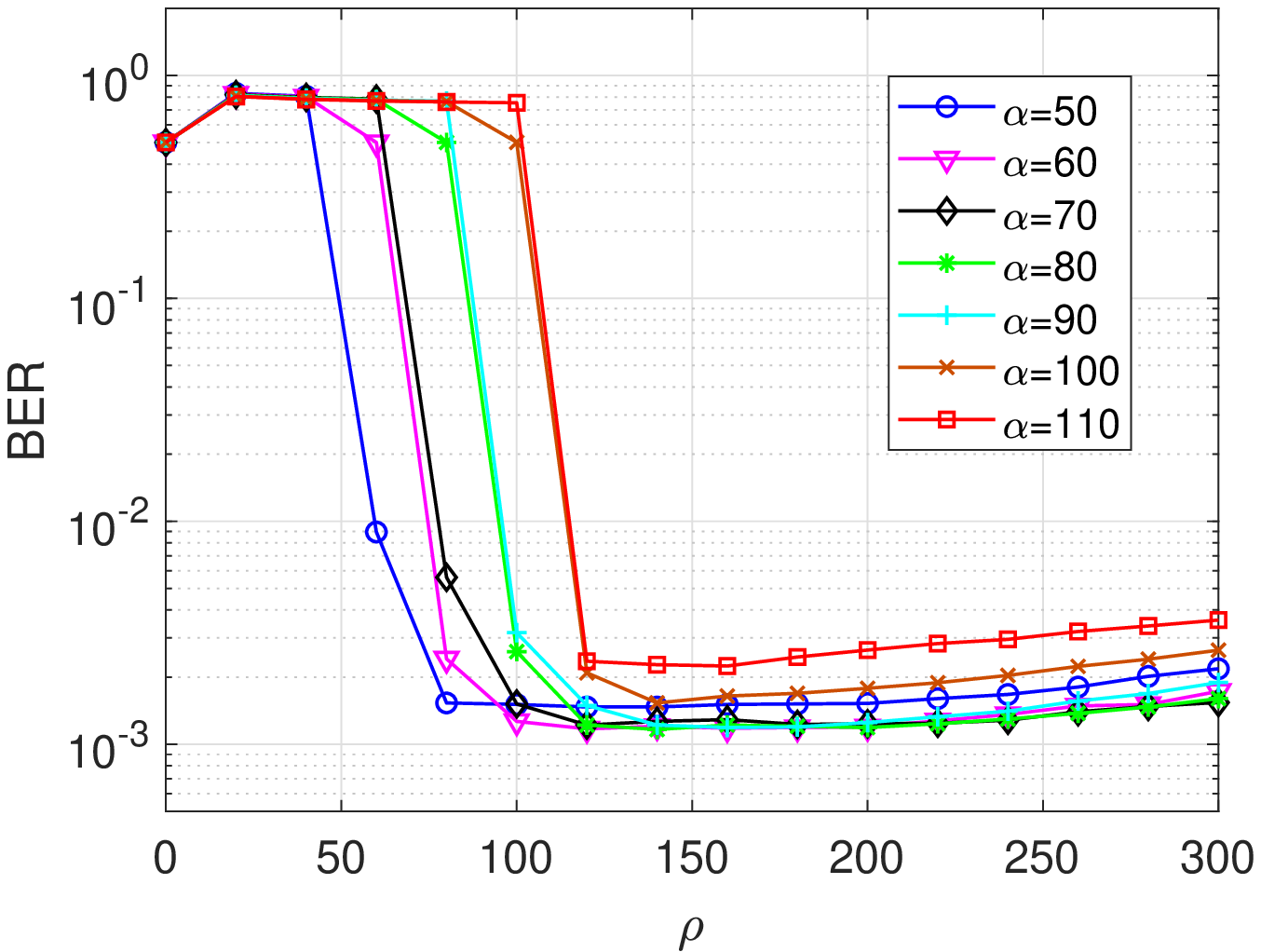}
            \label{ber_rho_4qam}
    \end{minipage}
    }
   \subfigure[BER performance vs. $\alpha$.]{
    \begin{minipage}{9cm}
    \centering
        \includegraphics[width=3.5in,height=2.7in]{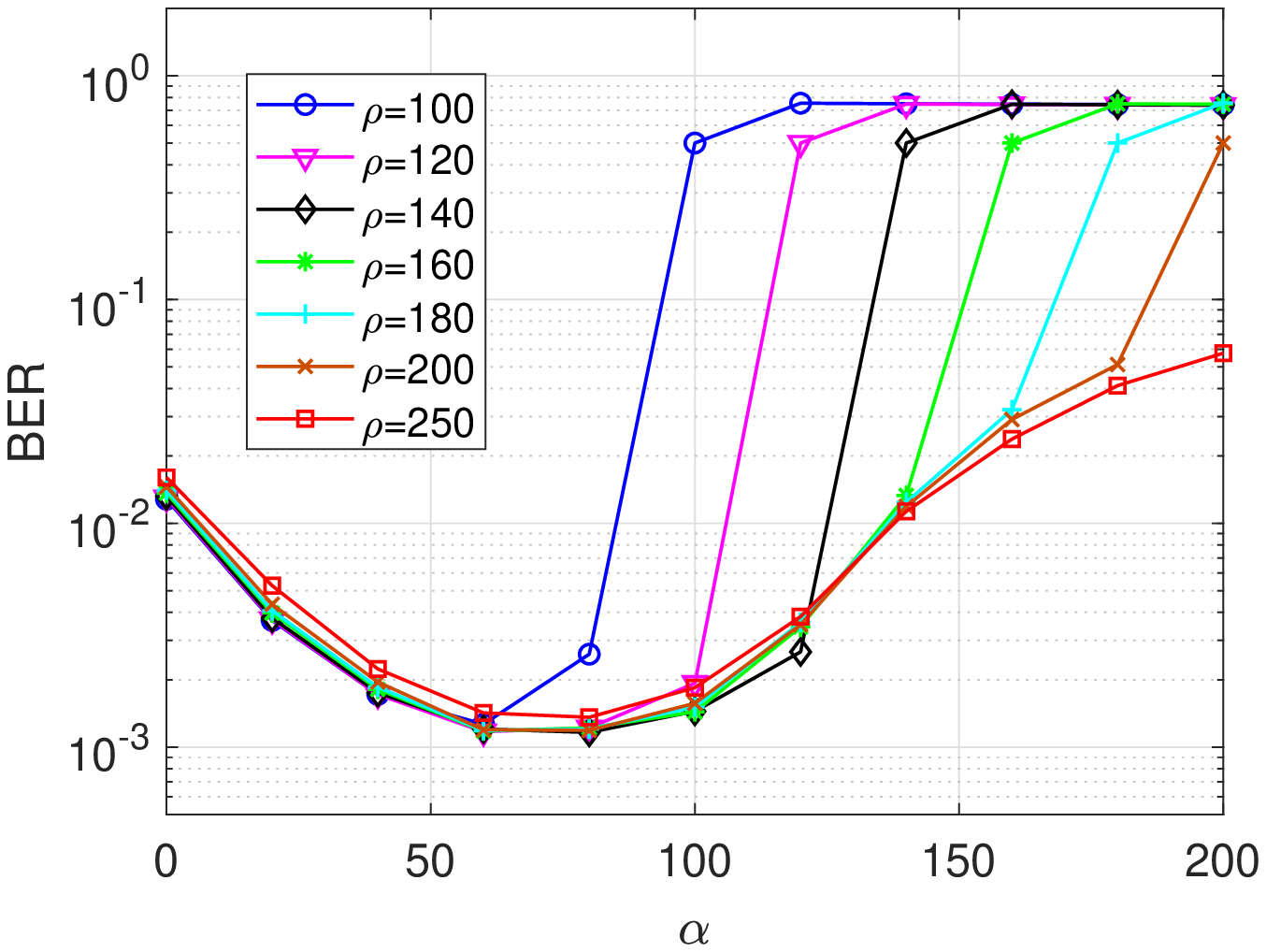}
            \label{ber_alpha_4qam}
    \end{minipage}
    }
 \centering
 \caption{The impact of $\rho$ and $\alpha$ on BER performance of the PS-ADMM detector.}
 \label{impact_BER}
\end{figure}
\begin{figure}[tp]
  \subfigure[Iteration number vs. $\rho$.]
    {
    \begin{minipage}{9cm}
    \centering
        \includegraphics[width=3.5in,height=2.7in]{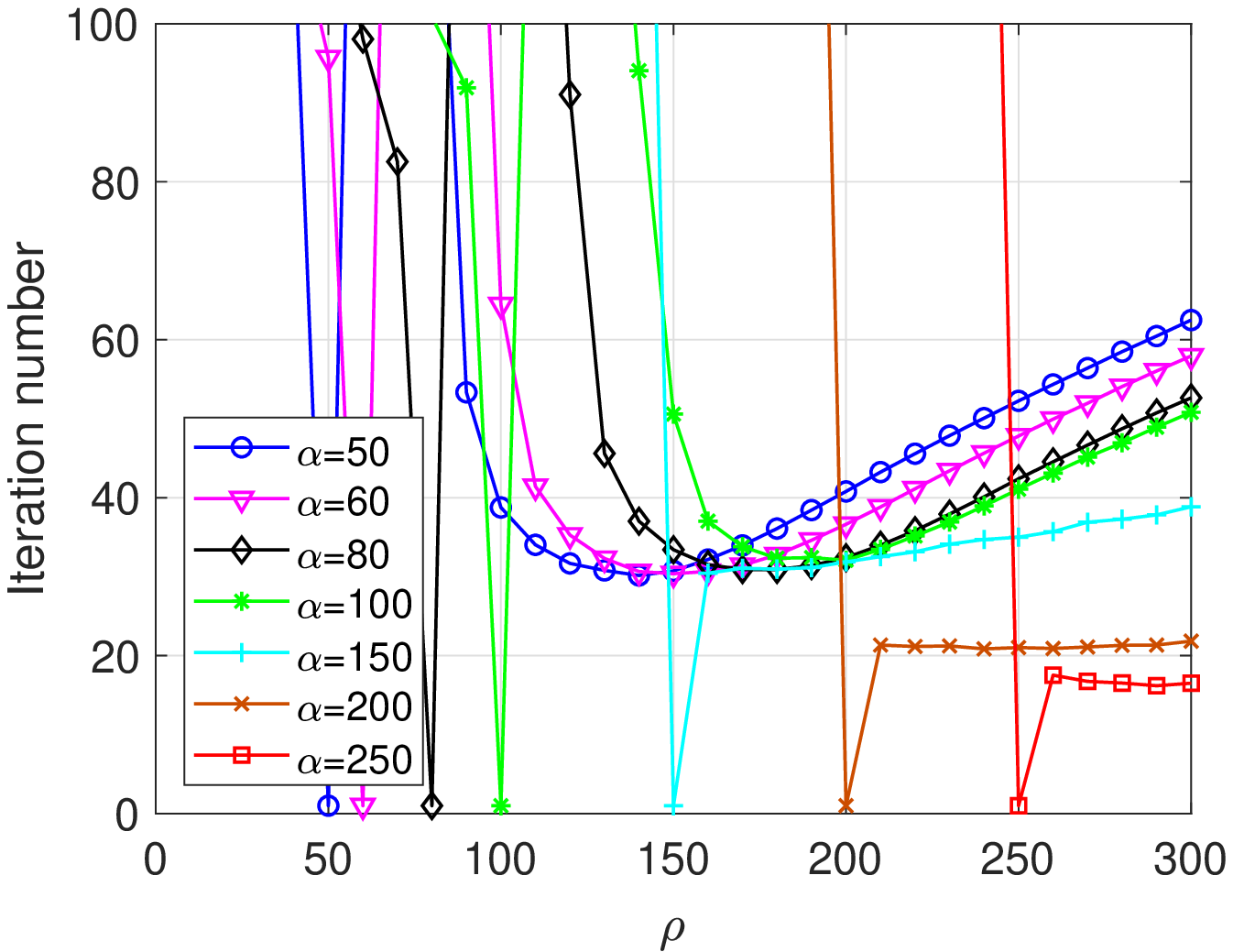}
            \label{iter_rho_4qam}
    \end{minipage}
    }
  \subfigure[Iteration number vs. $\alpha$.]
    {
    \begin{minipage}{9cm}
    \centering
        \includegraphics[width=3.5in,height=2.7in]{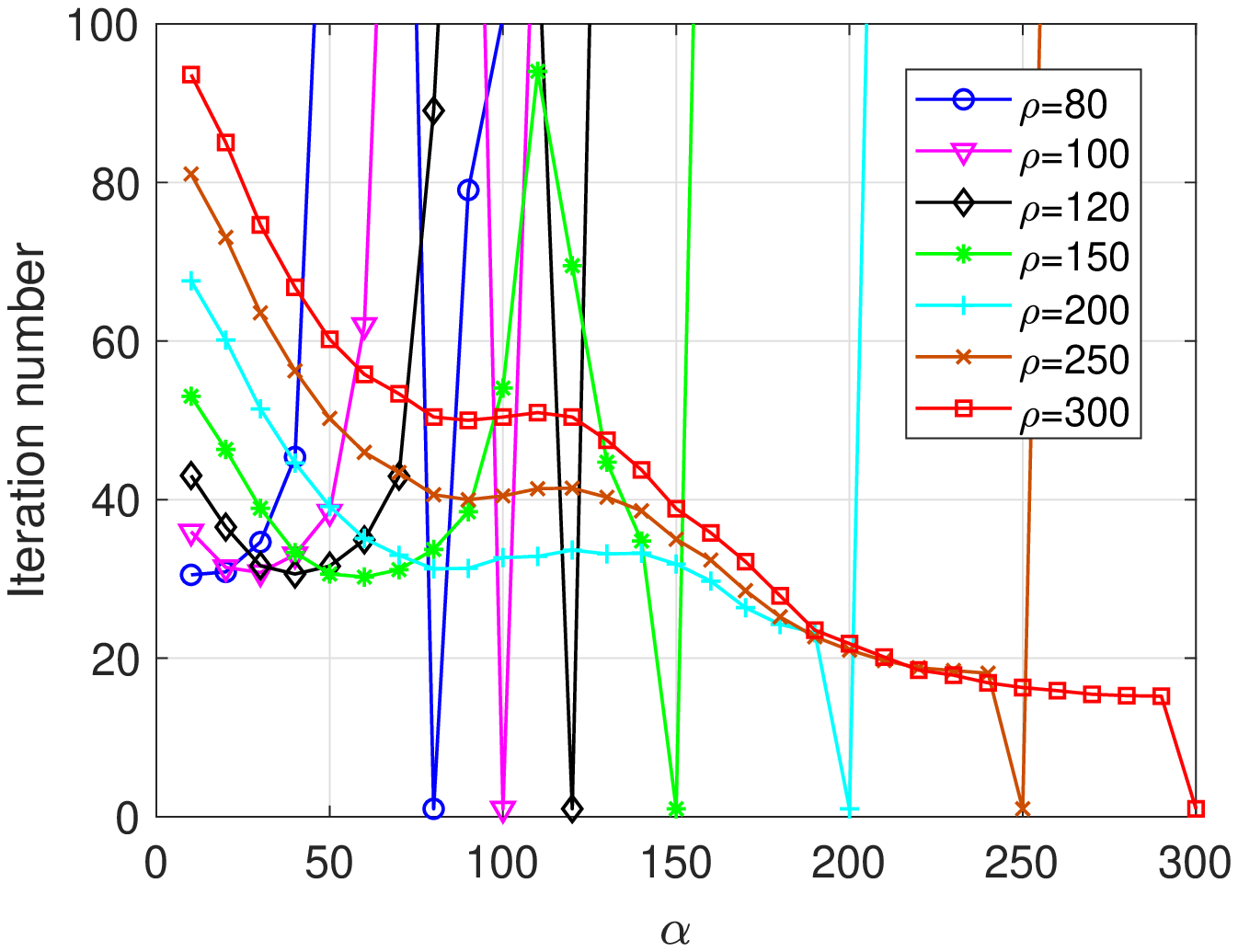}
            \label{iter_alpha_4qam}
    \end{minipage}
    }
 \centering
 \caption{The impact of $\rho$ and $\alpha$ on convergence performance of the PS-ADMM detector.}
 \label{impact_conv}
\end{figure}

 \subsection{Choice of Parameters} \label{simulation-parameter-setting}

In this subsection, we show that the proper parameters $\rho$ and $\alpha$ can achieve lower BER performance and speed up convergence of the proposed PS-ADMM detector. The considered modulation scheme is 4-QAM and the simulation parameters are $B=128$, $U=128$, and ${\rm SNR}=10{\rm dB}$.

In Fig.\,\ref{impact_BER}, it shows the effects on BER performance when the different values of the penalty parameters $\rho$ and $\alpha$ are chosen. From the figure, one can have the following observations: first, both $\rho$ and $\alpha$ can affect BER performance of the proposed PS-ADMM decoder; second, too large or too small values of $\alpha$ and $\rho$ can worsen BER performance of the detector. For the case of the presented simulation, one can see the proper $\rho \in [120~200]$ and $\alpha \in [60~80]$ respectively. Moreover, we note that when $\alpha$ approaches zero, the PS-ADMM detector degenerates to the conventional ADMM detector with a box constraint.

In Fig.\,\ref{impact_conv}, we study the effects of the penalty parameters $\rho$ and $\alpha$ on the convergence characteristic of the proposed PS-ADMM decoder.
From the figures, one can observe that the proposed PS-ADMM algorithm can always converge with different settings of the parameters $\rho$ and $\alpha$ when $\rho>\alpha$ is satisfied, and these parameters can affect its convergence rate.
From the figure, it shows that the larger the penalty parameter $\rho$ and the value of $\alpha$ is close to $\rho$, the faster the algorithm converges, but these too large parameters $\rho$ and $\alpha$ are not a good choice for BER performance of the PS-ADMM detector. There is no need to sacrifice a lot of BER performance just to reduce a dozen iterations. We can observe that the optimal value of $\rho$ and $\alpha$ in terms of convergence rate agrees with the optimal value of $\rho$ and $\alpha$ in terms of BER performance when iteration number reaches $30$.

In Fig.\,\ref{impact_conv_K}, not only can the impact of $\rho$ and $\alpha$ on convergence performance be observed, it can also see that the proposed PS-ADMM algorithm can converge within a few tens of iterations to converge to modest accuracy solutions, which is promising for large-scale MIMO detection scenarios.

Specifically, for higher order modulations, proper BER and convergence performance can be obtained only when $4^{q-1}\rho>\alpha_q, q = 1,\cdots,Q$ is satisfied.

\begin{figure}[tp]
  \subfigure
    {
    \begin{minipage}{9cm}
    \centering
        \includegraphics[width=3.5in,height=2.7in]{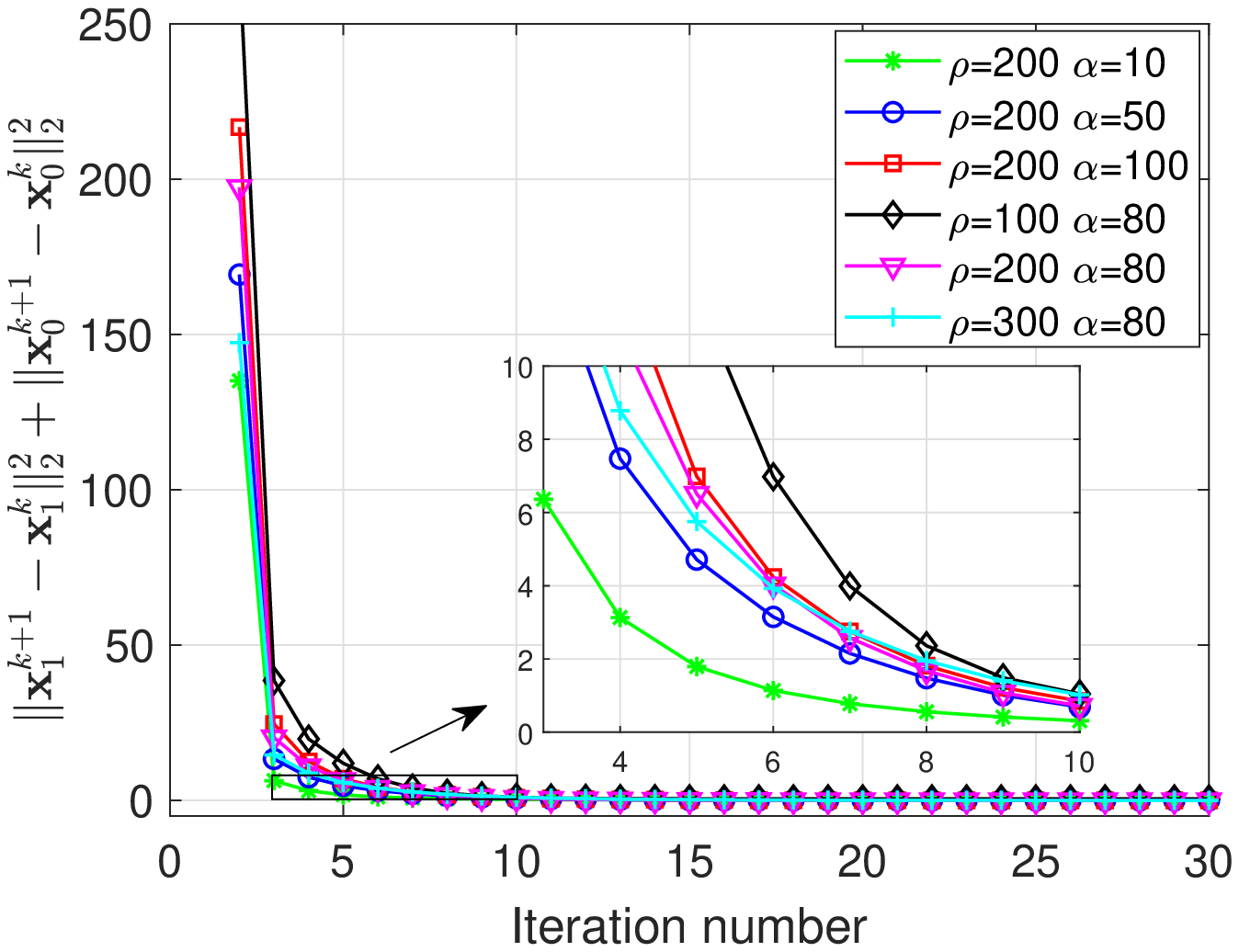}
            \label{res_iter_4qam}
    \end{minipage}
    }
   \centering
 \caption{The impact of the maximum iteration number $K$ on convergence performance of the PS-ADMM detector.}
 \label{impact_conv_K}
\end{figure}

\section{Conclusion}
\label{sec:Conclusion}

In this paper, we proposed a new MIMO detector for high-order QAM modulation signals via the PS-ADMM approach. We prove that the proposed PS-ADMM approach is theoretically-guaranteed convergent under some wild conditions. Compared with several state-of-the-art MIMO detectors, the proposed PS-ADMM detector has competitive BER performance and cheap computational complexity, especially when the ratio of the antenna number in the BS and user number is close to one. Since channel coding and soft decision-based MIMO detection is essentially used in practical systems, it would be meaningful to further study how the proposed PS-ADMM approach can be applied to such scenarios. In addition, how to choose the optimal penalty parameter is also an interesting research topic.

\appendices

\section{Proof of Lemmas \ref{lemma:z1}--\ref{lemma:L_bounded1}}\label{lemma1-3}
Before proving convergence of the proposed PS-ADMM algorithm, we give several lemmas and their proofs as follows.

\begin{lemma}\label{lemma:z1}
For Algorithm 1, the following inequality holds
\begin{equation}\label{eq:y_difference1}
\|\mathbf{y}^{k+1}-\mathbf{y}^k\|_2^2 \le \lambda_{\rm max}^2(\mathbf{H}^H\mathbf{H})\|\mathbf{x}_0^{k+1}-\mathbf{x}_0^k\|_2^2.
\end{equation}
\end{lemma}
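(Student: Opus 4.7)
The plan is to exploit the optimality condition of the $\mathbf{x}_0$-subproblem to express the dual variable $\mathbf{y}^{k+1}$ in closed form, and then combine this with the dual update to get a linear relation between $\mathbf{y}^{k+1}-\mathbf{y}^k$ and $\mathbf{x}_0^{k+1}-\mathbf{x}_0^k$. The desired bound then follows from the operator-norm inequality.

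First, I would compute $\nabla_{\mathbf{x}_0} L_\rho(\{\mathbf{x}_q^{k+1}\}_{q=1}^Q,\mathbf{x}_0,\mathbf{y}^k)$ at $\mathbf{x}_0=\mathbf{x}_0^{k+1}$ and set it to zero, since the subproblem \eqref{eq:x0_update} is strongly convex and unconstrained in $\mathbf{x}_0$. This yields
\[
\mathbf{H}^H\mathbf{H}\mathbf{x}_0^{k+1}-\mathbf{H}^H\mathbf{r}+\mathbf{y}^k+\rho\Big(\mathbf{x}_0^{k+1}-\sum_{q=1}^{Q}2^{q-1}\mathbf{x}_q^{k+1}\Big)=\mathbf{0}.
\]
Using the dual update \eqref{eq:y_update}, the last three terms on the left collapse to $\mathbf{y}^{k+1}$, so
\[
\mathbf{y}^{k+1}=\mathbf{H}^H\mathbf{r}-\mathbf{H}^H\mathbf{H}\mathbf{x}_0^{k+1}.
\]

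Next, I would subtract the analogous identity at iteration $k$ to eliminate $\mathbf{H}^H\mathbf{r}$, obtaining
\[
\mathbf{y}^{k+1}-\mathbf{y}^k=-\mathbf{H}^H\mathbf{H}\bigl(\mathbf{x}_0^{k+1}-\mathbf{x}_0^k\bigr).
\]
Taking squared 2-norms and applying the standard bound $\|\mathbf{A}\mathbf{v}\|_2\leq\lambda_{\max}(\mathbf{A})\|\mathbf{v}\|_2$ for the Hermitian positive semidefinite matrix $\mathbf{A}=\mathbf{H}^H\mathbf{H}$ gives \eqref{eq:y_difference1} immediately.

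This proof has essentially no obstacle; the only subtle point is making sure that the identity $\mathbf{y}^{k+1}=\mathbf{H}^H(\mathbf{r}-\mathbf{H}\mathbf{x}_0^{k+1})$ is available at \emph{both} iterations $k$ and $k+1$, which requires that the $\mathbf{x}_0$-update in \eqref{eq:x0_update} was used to produce $\mathbf{x}_0^k$ as well. This is true for every $k\ge 2$ by construction of Algorithm \ref{PS-ADMM algorithm}; for $k=1$ the inequality holds trivially if the initialization is interpreted consistently, or one can simply state the lemma for $k\ge 1$ with the optimality condition used at both ends.
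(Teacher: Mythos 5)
Your proposal is correct and follows essentially the same route as the paper's own proof: both combine the optimality condition of the $\mathbf{x}_0$-subproblem \eqref{eq:x0_update} with the dual update \eqref{eq:y_update} to obtain $\mathbf{y}^{k+1}=-\nabla_{\mathbf{x}_0}\ell(\mathbf{x}_0^{k+1})=\mathbf{H}^H(\mathbf{r}-\mathbf{H}\mathbf{x}_0^{k+1})$, and then bound the gradient difference by $\lambda_{\max}(\mathbf{H}^H\mathbf{H})$ (the paper does this via a mean-value/Lipschitz argument, you via the exact affine identity $\mathbf{y}^{k+1}-\mathbf{y}^k=-\mathbf{H}^H\mathbf{H}(\mathbf{x}_0^{k+1}-\mathbf{x}_0^k)$, which is the same computation since $\ell$ is quadratic). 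Your closing remark about needing the identity at both iterations, hence $k\ge 2$ or a consistent initialization, is a valid point of care that the paper itself glosses over.
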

\begin{proof}
Since $\mathbf{x}_0^{k+1}$ is a minimizer of problem \eqref{eq:x0_update}, it should satisfy the following optimality condition \cite{bertsekas2009convex}
\begin{align}\label{eq:x_nabla_expression}
\nabla_{\mathbf{x}_0} \ell\left(\mathbf{x}_0^{k+1}\right)+\mathbf{y}^k+\rho (\mathbf{x}_0^{k+1}-\sum_{q=1}^{Q}2^{q-1} \mathbf{x}_q^{k+1})=0.
\end{align}
Plugging $\mathbf{y}^{k+1}$ in \eqref{eq:y_update} into the above equation, we obtain
\begin{equation}\label{eq:z_nabla_expression}
\mathbf{y}^{k+1}=-\nabla_{\mathbf{x}_0} \ell\left(\mathbf{x}_0^{k+1}\right).
\end{equation}
According to Lagrange's mean value theorem, since $\ell\left(\mathbf{x}_0\right)$ is continuous and differentiable, there exists some point $\bar{\mathbf{x}}_0$ between $\mathbf{x}_0^{k}$ and $\mathbf{x}_0^{k+1}$ which satisfies
\begin{equation}\label{mean value theorem}
\frac{\nabla_{\mathbf{x}_0} \ell\left(\mathbf{x}_0^{k+1}\right)-\nabla_{\mathbf{x}_0} \ell\left(\mathbf{x}_0^{k}\right)}{\mathbf{x}_0^{k+1}-\mathbf{x}_0^{k}}= \nabla_{\bar{\mathbf{x}}_0}^2 \ell\left(\mathbf{x}_0\right).
\end{equation}
Moreover, since $\nabla_{\bar{\mathbf{x}}_0}^2 \ell\left(\mathbf{x}_0\right) =\mathbf{H}^{H} \mathbf{H} \preceq \lambda_{\rm max}(\mathbf{H}^{H} \mathbf{H})\mathbf{I}$, we have \begin{equation}\label{lipschitz x0}
\begin{split}
\!\!\|\!\nabla_{\mathbf{x}_0} \ell\left(\mathbf{x}_0^{k+1}\right)\!\!-\!\!\nabla_{\mathbf{x}_0} \ell\left(\mathbf{x}_0^{k}\right)\!\!\|_2^2
 \!\!\le\!\!\lambda_{\rm max}^2(\mathbf{H}^{H}\!\mathbf{H}) \|\mathbf{x}_0^{k+1}\!\!\!-\!\mathbf{x}_0^{k}\|_2^2.
 \end{split}
\end{equation}
From \eqref{lipschitz x0}, we can see that $\nabla_{\mathbf{x}_0} \ell\left(\mathbf{x}_0\right)$ is Lipschitz continuous with constant $\lambda_{\rm max}(\mathbf{H}^{H} \mathbf{H})$.
Plugging \eqref{eq:z_nabla_expression} into LHS of equation \eqref{lipschitz x0}, we can obtain
\begin{align*}
\begin{split}
\|\mathbf{y}^{k+1}-\mathbf{y}^k\|_2^2
=&\|\nabla_{\mathbf{x}_0} \ell\left(\mathbf{x}_0^{k+1}\right)-\nabla_{\mathbf{x}_0} \ell\left(\mathbf{x}_0^{k}\right)\|_2^2 \\
\le & \lambda_{\rm max}^2(\mathbf{H}^{H} \mathbf{H}) \|\mathbf{x}_0^{k+1}-\mathbf{x}_0^{k}\|_2^2.
\end{split}
\end{align*}
This completes the proof.
$\hfill\blacksquare$
\end{proof}

\begin{lemma}\label{lemma:L_difference}
Let $\alpha_q$ and $\rho$ satisfy $4^{q-1}\rho>\alpha_q$, $\forall q\in\{1,\dotsb, Q\}$. Then, for Algorithm 1, we have the following inequality
{\setlength\abovedisplayskip{2pt}
 \setlength\belowdisplayskip{2pt}
 \setlength\jot{2pt}
\begin{equation}\label{eq:L_difference}
\begin{split}
\!\!&L_{\rho}\left(\{\mathbf{x}_q^{k+1}\}_{q=1}^{Q}\!, \mathbf{x}_0^{k+1}\!, \mathbf{y}^{k+1}\right)\!-\!L_{\rho}\left (\{\mathbf{x}_q^{k}\}_{q=1}^{Q}, \mathbf{x}_0^{k}\!, \mathbf{y}^{k} \right)\\
\!\!\le&\!\!\sum_{q=1}^{Q}\!-\!\frac{\gamma_q(\rho)}{2}\|\mathbf{x}_q^{k+1}\!\!-\!\!\mathbf{x}_{q}^{k}\|_2^2 \!\!-\!\!\Big(\frac{\gamma_0(\rho)}{2}\!\!-\!\!\frac{\lambda_{\rm max}^2(\mathbf{H}^{H} \mathbf{H})}{\rho}\Big)\|\mathbf{x}_0^{k+1}\!\!-\!\!\mathbf{x}_0^k\|_2^2,
\end{split}
\end{equation}
where} $\gamma_q(\rho)=4^{q-1}\rho-\alpha_q$ and $\gamma_0(\rho)=\rho + \lambda_{\rm min}(\mathbf{H}^{H} \mathbf{H})$.
\end{lemma}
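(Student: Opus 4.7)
The plan is to decompose the one-step change in the augmented Lagrangian into three telescoping contributions corresponding to the sequential block updates, and then bound each contribution separately. Writing
\[
L_{\rho}^{k+1} - L_{\rho}^{k} \;=\; A_1 + A_2 + A_3,
\]
where $A_1$ collects the change caused by sweeping through the $Q$ primal blocks $\mathbf{x}_1,\ldots,\mathbf{x}_Q$ (with $\mathbf{x}_0$ and $\mathbf{y}$ held at iterate $k$), $A_2$ is the change from the subsequent $\mathbf{x}_0$ update, and $A_3$ is the change from the dual update, reduces the lemma to three independent one-sided estimates.

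For $A_1$, I would further split $A_1 = \sum_{q=1}^{Q} A_{1,q}$, where $A_{1,q}$ is the change caused by the $q$-th block update alone. Viewed as a function of $\mathbf{x}_q$ with all other blocks fixed at their current values, $L_{\rho}$ is a quadratic whose Hessian simplifies to $(4^{q-1}\rho-\alpha_q)\mathbf{I}$: the $\tfrac{\rho}{2}\|\cdot\|_2^2$ term in \eqref{eq:lagrangian_PSADMM}, carrying the coefficient $2^{q-1}$ on $\mathbf{x}_q$, contributes $4^{q-1}\rho$ of curvature and dominates the $-\alpha_q$ from the non-convex penalty thanks to the assumption $4^{q-1}\rho > \alpha_q$. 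Since $\mathbf{x}_q^{k+1}$ is a constrained minimizer of this $\gamma_q(\rho)$-strongly-convex quadratic on the convex box $\tilde{\mathcal{X}}_q^{U}$, the standard strong-convexity descent inequality at a constrained minimum yields $A_{1,q} \le -\tfrac{\gamma_q(\rho)}{2}\|\mathbf{x}_q^{k+1}-\mathbf{x}_q^{k}\|_2^2$. For $A_2$, the $\mathbf{x}_0$ subproblem is an unconstrained quadratic with Hessian $\mathbf{H}^{H}\mathbf{H}+\rho\mathbf{I} \succeq \gamma_0(\rho)\mathbf{I}$, so the analogous argument gives $A_2 \le -\tfrac{\gamma_0(\rho)}{2}\|\mathbf{x}_0^{k+1}-\mathbf{x}_0^{k}\|_2^2$.

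The term $A_3$ is the only one that can \emph{increase} $L_{\rho}$, and it is the reason Lemma \ref{lemma:z1} was proved first. Since $L_{\rho}$ depends on $\mathbf{y}$ only through $\mathrm{Re}\langle \mathbf{x}_0-\sum_{q}2^{q-1}\mathbf{x}_q,\mathbf{y}\rangle$, and the dual update \eqref{eq:y_update} gives $\mathbf{x}_0^{k+1}-\sum_{q}2^{q-1}\mathbf{x}_q^{k+1}=\tfrac{1}{\rho}(\mathbf{y}^{k+1}-\mathbf{y}^{k})$, a direct computation yields $A_3 = \tfrac{1}{\rho}\|\mathbf{y}^{k+1}-\mathbf{y}^{k}\|_2^2$. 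Applying Lemma \ref{lemma:z1} then bounds $A_3 \le \tfrac{\lambda_{\max}^2(\mathbf{H}^{H}\mathbf{H})}{\rho}\|\mathbf{x}_0^{k+1}-\mathbf{x}_0^{k}\|_2^2$, and adding the three bounds produces exactly \eqref{eq:L_difference}.

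The main obstacle is the careful accounting required for the $\mathbf{x}_q$-subproblems: one must verify that the curvature contributed by $\rho$ really dominates the negative penalty despite the modulation-bit scaling $2^{q-1}$ (this is where the precise form of $\gamma_q(\rho)$ comes from), and then correctly apply the constrained-minimization descent inequality on the complex box $\tilde{\mathcal{X}}_q^{U}$, interpreted as a product of real intervals via the real/imaginary decomposition. Once these strong-convexity constants are in hand, the remainder is bookkeeping, with Lemma \ref{lemma:z1} absorbing the harmful dual-ascent term $A_3$ by trading $\|\mathbf{y}^{k+1}-\mathbf{y}^{k}\|_2^2$ for $\|\mathbf{x}_0^{k+1}-\mathbf{x}_0^{k}\|_2^2$.
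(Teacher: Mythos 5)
Your proposal is correct and follows essentially the same route as the paper's proof: the paper also telescopes the change in $L_\rho$ into the dual-update term (your $A_3$, its ``term 1''), handled via \eqref{eq:y_update} and Lemma \ref{lemma:z1}, and the primal-update terms (your $A_1+A_2$, its ``term 2''), handled by combining the strong-convexity inequality with curvature constants $\gamma_q(\rho)=4^{q-1}\rho-\alpha_q$ and $\gamma_0(\rho)=\rho+\lambda_{\rm min}(\mathbf{H}^H\mathbf{H})$ with the variational-inequality/first-order optimality conditions of the Gauss--Seidel block minimizers. Your packaging of the latter as the ``descent inequality at a constrained minimum'' is just a compressed form of the paper's two-step argument, so the proofs coincide in substance.
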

\begin{proof}
We split LHS of the inequality \eqref{eq:L_difference} into two terms
{\setlength\abovedisplayskip{2pt}
 \setlength\belowdisplayskip{2pt}
 \setlength\jot{2pt}
\begin{equation}\label{eq:successive_L1}
\begin{split}
&L_{\rho}(\{\mathbf{x}_q^{k+1}\}_{q=1}^Q, \mathbf{x}_0^{k+1}, \mathbf{y}^{k+1})-L_{\rho}(\{\mathbf{x}^{k}_q\}_{q=1}^Q, \mathbf{x}_0^{k}, \mathbf{y}^{k})\\
&=\!\underbrace{\left(L_{\rho}(\{\mathbf{x}_q^{k+1}\}_{q=1}^Q, \mathbf{x}_0^{k+1},\! \mathbf{y}^{k+1})\!-\!L_{\rho}(\{\mathbf{x}_q^{k+1}\}_{q=1}^Q,\! \mathbf{x}_0^{k+1},\! \mathbf{y}^{k})\right)}_{\rm{term\ 1}}\\
&\quad+\!\underbrace{\left(L_{\rho}(\{\mathbf{x}^{k+1}_q\}_{q=1}^Q, \mathbf{x}_0^{k+1}, \mathbf{y}^{k})-L_{\rho}(\{\mathbf{x}^{k}_q\}_{q=1}^Q, \mathbf{x}_0^{k}, \mathbf{y}^{k})\right)}_{\rm{term\ 2}}. \nonumber
\end{split}
\end{equation}
For} the first term, we have the following derivations
\begin{align}\label{eq:extra}
\begin{split}
&L_{\rho}(\{\mathbf{x}_q^{k+1}\}_{q=1}^{Q}, \mathbf{x}_0^{k+1}, \mathbf{y}^{k+1})-L_{\rho}(\{\mathbf{x}_q^{k+1}\}_{q=1}^{Q}, \mathbf{x}_0^{k+1}, \mathbf{y}^{k})\\
&\!=\!{\rm Re}\big\langle \mathbf{x}_0^{k+1}\!\!-\!\!\sum_{q=1}^{Q}2^{q-1} \mathbf{x}_q^{k\!+\!1}\!, \!\mathbf{y}^{k\!+\!1}\big\rangle\!\!
-\!\!{\rm Re}\big\langle \mathbf{x}_0^{k\!+\!1}\!\!-\!\!\sum_{q=1}^{Q}2^{q\!-\!1} \mathbf{x}_q^{k\!+\!1}, \!\mathbf{y}^{k}\big\rangle \\
&={\rm Re}\big\langle \mathbf{x}_0^{k+1}-\sum_{q=1}^{Q}2^{q-1} \mathbf{x}_q^{k+1},\; \mathbf{y}^{k+1}-\mathbf{y}^{k}\big\rangle \\
&\stackrel{\rm (a)}
=\frac{1}{\rho}\|\mathbf{y}^{k+1}-\mathbf{y}^{k}\|_2^2\\
&\stackrel{\rm (b)}
\le\frac{\lambda_{\rm max}^2(\mathbf{H}^{H} \mathbf{H})}{\rho}\|\mathbf{x}_0^{k+1}-\mathbf{x}_0^{k}\|_2^2,
\end{split}
\end{align}
where ``$\stackrel{\rm (a)}{=}$'' and ``$\stackrel{\rm (b)}{\le}$'' comes from \eqref{eq:y_update} and \eqref{eq:y_difference1} respectively.\\
For the second term, we have the following derivations
\begin{align}\label{eq:extra1}
\begin{split}
&L_{\rho}(\{\mathbf{x}_q^{k+1}\}_{q=1}^{Q}, \mathbf{x}_0^{k+1}, \mathbf{y}^{k})-L_{\rho}(\{\mathbf{x}_q^{k}\}_{q=1}^{Q}, \mathbf{x}_0^{k}, \mathbf{y}^{k})\\
=&L_{\rho}(\{\mathbf{x}_q^{k+1}\}_{q=1}^{Q}, \mathbf{x}_0^{k}, \mathbf{y}^{k})-L_{\rho}(\{\mathbf{x}_q^{k}\}_{q=1}^{Q}, \mathbf{x}_0^{k}, \mathbf{y}^{k})\\
\hspace{0.5cm} +&L_{\rho}(\{\mathbf{x}_q^{k+1}\}_{q=1}^{Q}, \mathbf{x}_0^{k+1}, \mathbf{y}^{k})-L_{\rho}(\{\mathbf{x}_q^{k+1}\}_{q=1}^{Q}, \mathbf{x}_0^{k}, \mathbf{y}^{k})\\
\le& \sum_{q=1}^{Q}\Bigl({\rm Re}\left\langle\nabla_{\mathbf{x}_q}  L_{\rho}(\!\mathbf{x}_1^{k+1}\!\!\!\!,\!\cdots\!,\mathbf{x}_{q-1}^{k+1}\!,\!\mathbf{x}_q^{k+1},\! \mathbf{x}_{q+1}^k\!,\!\cdots\!,\!\mathbf{x}_Q^k,\!\mathbf{x}_0^{k},\! \mathbf{y}^{k}), \right.\\
&\hspace{1cm}\left.\mathbf{x}_q^{k+1}-\mathbf{x}_q^{k}\right\rangle-\frac{4^{q-1}\rho-\alpha_q}{2}\|\mathbf{x}_q^{k+1}-\mathbf{x}_{q}^{k}\|_2^2\Big)\\
&\hspace{0.2cm}+{\rm Re}\left\langle\nabla_{\mathbf{x}_0}  L_{\rho}(\{\mathbf{x}_q^{k+1}\}_{q=1}^{Q}, \mathbf{x}_0^{k+1}, \mathbf{y}^{k}), \mathbf{x}_0^{k+1}-\mathbf{x}_0^{k}\right\rangle\\
&\hspace{1.5cm}-\frac{\rho + \lambda_{\rm min}(\mathbf{H}^{H} \mathbf{H})}{2}\|\mathbf{x}_0^{k+1}-\mathbf{x}^{k}_0\|_2^2 \\
 \le& \sum_{q=1}^{Q}\!\!\frac{\alpha_q\!\!-\!\!4^{q-1}\rho}{2}\|\mathbf{x}_q^{k+1}\!\!-\!\!\mathbf{x}_{q}^{k}\|_2^2\!\!-\!\!\frac{\rho \!\!+\!\!\lambda_{\rm min}(\mathbf{H}^{H}\mathbf{H})}{2}\|\mathbf{x}_0^{k+1}\!\!-\!\!\mathbf{x}^{k}_0\|_2^2,
\end{split}
\end{align}
where the first inequality holds since the corresponding augmented Lagrangian functions are strongly convex \cite{boyd2004convex} and the second inequality holds since $\mathbf{x}_q^{k+1}$ and $\mathbf{x}_0^{k+1}$ are minimizers of the problems \eqref{eq:x_q_update} and \eqref{eq:x0_update}, i.e.,
\[
\begin{split}
&\big<\nabla_{\mathbf{x}_q} L_{\rho}(\mathbf{x}_q^{k+1},\!\mathbf{x}_1^{k+1}\!\!\!\!,\!\cdots\!,\mathbf{x}_{q-1}^{k+1}\!, \mathbf{x}_{q+1}^k\!,\!\cdots\!,\!\mathbf{x}_Q^k,\!\mathbf{x}_0^{k},\! \mathbf{y}^{k}),\\
&\hspace{6cm}\mathbf{x}_q^{k}\!-\!\mathbf{x}_q^{k+1}\big> \ge 0 \\
&\nabla_{\mathbf{x}_0}  L_{\rho}(\{\mathbf{x}_q^{k+1}\}_{q=1}^{Q}, \mathbf{x}_0^{k+1}, \mathbf{y}^{k})= 0.
\end{split}
\]
Adding both sides of inequalities \eqref{eq:extra} and \eqref{eq:extra1} and letting $\gamma_q(\rho)=4^{q-1}\rho-\alpha_q$ and $\gamma_0(\rho)=\rho + \lambda_{\rm min}(\mathbf{H}^{H} \mathbf{H})$, we can obtain
\begin{align*}
&L_{\rho}(\{\mathbf{x}_q^{k+1}\}_{q=1}^{Q}, \mathbf{x}_0^{k+1}, \mathbf{y}^{k+1})-L_{\rho}(\{\mathbf{x}_q^{k}\}_{q=1}^{Q}, \mathbf{x}_0^{k}, \mathbf{y}^{k})\\
&\le \sum_{q=1}^{Q}-\frac{\gamma_q(\rho)}{2} \|\mathbf{x}_q^{k+1}-\mathbf{x}_{q}^{k}\|_2^2\\
&\quad\  -\Big(\frac{\gamma_0(\rho)}{2}-\frac{\lambda_{\rm max}^2(\mathbf{H}^{H} \mathbf{H})}{\rho}\Big) \|\mathbf{x}_0^{k+1}-\mathbf{x}_0^{k}\|_2^2,
\end{align*}
which completes the proof.
$\hfill\blacksquare$
\end{proof}

\begin{lemma}\label{lemma:L_bounded1}
Let $\alpha_q$, $\forall q\in\{1,\dotsb,Q\}$, and $\rho$ satisfy $4^{q-1}\rho>\alpha_q$ and $\rho > \sqrt{2} \lambda_{\rm max}(\mathbf{H}^H\mathbf{H})$. Assume tuples $\{\{\mathbf{x}_q^{k}\}_{q=1}^{Q},\mathbf{x}_0^k,\mathbf{y}^k\}$ is generated by Algorithm 1, then $L_{\rho}(\{\mathbf{x}_q^{k}\}_{q=1}^{Q},\mathbf{x}_0^k,\mathbf{y}^k)$ is lower bounded as follows
\begin{align}\label{eq:Lbound}
\begin{split}
&L_{\rho}(\{\mathbf{x}_q^{k}\}_{q=1}^{Q},\mathbf{x}_0^k,\mathbf{y}^k) \ge \ell\big(\sum_{q=1}^{Q}2^{q-1} \mathbf{x}_q^{k}\big)-\sum_{q=1}^{Q}\frac{\alpha_q}{2}\Vert\mathbf{x}_q^{k} \Vert_2 ^{2}.
\end{split}
\end{align}
\end{lemma}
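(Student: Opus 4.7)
The plan is to reduce the inequality \eqref{eq:Lbound} to a one-line spectral bound on $\mathbf{H}$, by exploiting an optimality identity already derived inside the proof of Lemma \ref{lemma:z1}. Introducing the shorthand $\mathbf{s}^k := \sum_{q=1}^{Q}2^{q-1}\mathbf{x}_q^k$ and $\mathbf{d}^k := \mathbf{x}_0^k-\mathbf{s}^k$, and substituting the explicit form \eqref{eq:lagrangian_PSADMM} of $L_{\rho}$, the $-\sum_q\frac{\alpha_q}{2}\|\mathbf{x}_q^k\|_2^2$ contributions cancel on both sides, so the claim reduces to
\begin{equation*}
\tfrac{1}{2}\|\mathbf{r}-\mathbf{H}\mathbf{x}_0^k\|_2^2 + {\rm Re}\langle\mathbf{d}^k,\mathbf{y}^k\rangle + \tfrac{\rho}{2}\|\mathbf{d}^k\|_2^2 \ \geq\ \tfrac{1}{2}\|\mathbf{r}-\mathbf{H}\mathbf{s}^k\|_2^2.
\end{equation*}

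Next, I would invoke the identity \eqref{eq:z_nabla_expression} from Lemma \ref{lemma:z1}, with the iteration index shifted by one: for every iterate produced after at least one $\mathbf{x}_0$-update, $\mathbf{y}^k = -\nabla_{\mathbf{x}_0}\ell(\mathbf{x}_0^k) = \mathbf{H}^H(\mathbf{r}-\mathbf{H}\mathbf{x}_0^k)$. In parallel, I would expand the right-hand side using $\mathbf{r}-\mathbf{H}\mathbf{s}^k = (\mathbf{r}-\mathbf{H}\mathbf{x}_0^k)+\mathbf{H}\mathbf{d}^k$, giving
\begin{equation*}
\tfrac{1}{2}\|\mathbf{r}-\mathbf{H}\mathbf{s}^k\|_2^2 = \tfrac{1}{2}\|\mathbf{r}-\mathbf{H}\mathbf{x}_0^k\|_2^2 + {\rm Re}\langle\mathbf{H}\mathbf{d}^k,\mathbf{r}-\mathbf{H}\mathbf{x}_0^k\rangle + \tfrac{1}{2}\|\mathbf{H}\mathbf{d}^k\|_2^2.
\end{equation*}
Plugging the expression for $\mathbf{y}^k$ into ${\rm Re}\langle\mathbf{d}^k,\mathbf{y}^k\rangle$ reproduces exactly the cross term ${\rm Re}\langle\mathbf{H}\mathbf{d}^k,\mathbf{r}-\mathbf{H}\mathbf{x}_0^k\rangle$, and the $\frac{1}{2}\|\mathbf{r}-\mathbf{H}\mathbf{x}_0^k\|_2^2$ summands cancel on both sides as well.

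What remains is the purely quadratic inequality $\frac{\rho}{2}\|\mathbf{d}^k\|_2^2 \geq \frac{1}{2}\|\mathbf{H}\mathbf{d}^k\|_2^2$. Since $\|\mathbf{H}\mathbf{d}^k\|_2^2 \leq \lambda_{\max}(\mathbf{H}^H\mathbf{H})\|\mathbf{d}^k\|_2^2$, this holds as soon as $\rho \geq \lambda_{\max}(\mathbf{H}^H\mathbf{H})$, which is implied by the standing assumption $\rho > \sqrt{2}\lambda_{\max}(\mathbf{H}^H\mathbf{H})$. I anticipate the only mildly technical point being the careful bookkeeping of the real-part operator ${\rm Re}\langle\cdot,\cdot\rangle$ during the cross-term cancellation, together with the remark that the identity $\mathbf{y}^k = \mathbf{H}^H(\mathbf{r}-\mathbf{H}\mathbf{x}_0^k)$ comes from the first-order condition \eqref{eq:x_nabla_expression} established in Lemma \ref{lemma:z1} and therefore requires $\mathbf{x}_0^k$ to be the output of at least one minimization of \eqref{eq:x0_update}; neither is a substantive obstacle.
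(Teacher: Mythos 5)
Your proposal is correct and follows essentially the same route as the paper: both hinge on the identity $\mathbf{y}^k=-\nabla_{\mathbf{x}_0}\ell(\mathbf{x}_0^k)$ from Lemma~\ref{lemma:z1}, compare $\ell(\mathbf{x}_0^k)$ plus the dual linear term against $\ell\big(\sum_{q}2^{q-1}\mathbf{x}_q^k\big)$, and absorb the leftover $\tfrac{1}{2}\|\mathbf{H}\mathbf{d}^k\|_2^2$ into $\tfrac{\rho}{2}\|\mathbf{d}^k\|_2^2$ via $\rho>\sqrt{2}\,\lambda_{\rm max}(\mathbf{H}^H\mathbf{H})$. The only cosmetic difference is that you expand the quadratic $\ell$ exactly where the paper invokes the Descent Lemma, which for a quadratic objective is the identical computation (and your explicit caveat about needing at least one $\mathbf{x}_0$-update before the identity holds is, if anything, more careful than the paper).
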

\begin{proof}
Plugging \eqref{eq:z_nabla_expression} into \eqref{eq:lagrangian_PSADMM}, we obtain
\begin{align}\label{eq:Lbound1}
\begin{split}
&L_{\rho}(\{\mathbf{x}_q^{k}\}_{q=1}^{Q}, \mathbf{x}_0^{k}, \mathbf{y}^{k})=\ell\left(\mathbf{x}_0^{k}\right)-\sum_{q=1}^{Q}\frac{\alpha_q}{2}\Vert\mathbf{x}_q^{k} \Vert_2 ^{2}\\
&\!+\!{\rm Re}\Big\langle \!\sum_{q=1}^{Q}2^{q-1} \mathbf{x}_q^{k}\!-\!\mathbf{x}_0^{k},\!\nabla_{\mathbf{x}_0} \ell\left(\mathbf{x}_0^{k}\right)\!\Big\rangle \!+\!\frac{\rho}{2}\Big\|\mathbf{x}_0^{k}\!-\!\sum_{q=1}^{Q}2^{q-1}\mathbf{x}_q^{k}\Big\|_2^2.\\
\end{split}
\end{align}
Since we show that gradient $\|\nabla_{\mathbf{x}_0}\ell\left(\mathbf{x}_0\right)\|_2$ is Lipschitz continuous in Lemma \ref{lemma:z1} and $ \|\nabla_{\mathbf{x}_0}^2 \ell\left(\mathbf{x}_0\right)\|_2 \le \lambda_{\rm max}(\mathbf{H}^H\mathbf{H})$, according to the Decent Lemma \cite{bertsekas1999nonlinear}, we can obtain
\begin{align*}
\ell\big(\sum_{q=1}^{Q}2^{q-1} \mathbf{x}_q^{k}\big) &\!\le\! \ell\left(\mathbf{x}_0^{k}\right)\!+\!{\rm Re}\Big\langle \!\nabla_{\mathbf{x}_0} \ell\left(\mathbf{x}_0^{k}\right)\!,\!\sum_{q=1}^{Q}2^{q-1} \mathbf{x}_q^{k}-\mathbf{x}_0^{k} \!\Big\rangle \\
&\quad+\frac{\lambda_{\rm max}(\mathbf{H}^H\mathbf{H})}{2}\Big\|\sum_{q=1}^{Q}2^{q-1}\mathbf{x}_q^{k}-\mathbf{x}_0^{k}\Big\|_2^2,
\end{align*}
which can be further derived to the following inequality
\begin{align}\label{eq:Lbound2}
\begin{split}
&\ell\left(\mathbf{x}_0^{k}\right)+{\rm Re}\Big\langle \sum_{q=1}^{Q}2^{q-1} \mathbf{x}_q^{k}-\mathbf{x}_0^{k},\nabla_{\mathbf{x}_0} \ell\left(\mathbf{x}_0^{k}\right) \Big\rangle \\
&\ge \ell\big(\sum_{q=1}^{Q}2^{q-1} \mathbf{x}_q^{k}\big)-\frac{\lambda_{\rm max}(\mathbf{H}^H\mathbf{H})}{2}\Big\|\sum_{q=1}^{Q}2^{q-1}\mathbf{x}_q^{k}-\mathbf{x}_0^{k}\Big\|_2^2.
\end{split}
\end{align}
Plugging \eqref{eq:Lbound2} into \eqref{eq:Lbound1}, we can get
\begin{align}\label{eq:Lbound3}
\begin{split}
&L_{\rho}(\{\mathbf{x}_q^{k}\}_{q=1}^{Q},\mathbf{x}_0^k,\mathbf{y}^k) \ge \ell\big(\sum_{q=1}^{Q}2^{q-1} \mathbf{x}_q^{k}\big)-\sum_{q=1}^{Q}\frac{\alpha_q}{2}\Vert\mathbf{x}_q^{k} \Vert_2 ^{2}\\
&\hspace{2cm} +\frac{\rho-\lambda_{\rm max}(\mathbf{H}^H\mathbf{H})}{2}\Big\|\mathbf{x}_0^{k}-\sum_{q=1}^{Q}2^{q-1}\mathbf{x}_q^{k}\Big\|_2^2.
\end{split}
\end{align}
Since $\ell\big(\sum_{q=1}^{Q}2^{q-1} \mathbf{x}_q^{k}\big)-\sum_{q=1}^{Q}\frac{\alpha_q}{2}\Vert\mathbf{x}_q^{k} \Vert_2 ^{2}$ is bounded over $\mathbf{x}_{qR}, \mathbf{x}_{qI} \in [-1\ 1]^U$, as well as the fact that $\rho-\lambda_{\rm max}(\mathbf{H}^H\mathbf{H})>0$ comes from $\rho> \sqrt{2} \lambda_{\rm max}(\mathbf{H}^H\mathbf{H})$. Using these two cases leads to the desired result that $L_{\rho}(\{\mathbf{x}_q^{k}\}_{q=1}^{Q},\mathbf{x}_0^k,\mathbf{y}^k)$ is lower bounded and Lemma \ref{lemma:L_bounded1} has been proved.

$\hfill\blacksquare$
\end{proof}

\section{Proof of Theorem 1}\label{PS-ADMM Proof}
According to Lemma 2, summing both sides of the inequality \eqref{eq:L_difference} when $k=1,2,\cdots,+\infty$, we can obtain
\begin{equation}
    \begin{split}
    &L_{\rho}\left (\{\mathbf{x}_q^{1}\}_{q=1}^{Q}, \mathbf{x}_0^{1}, \mathbf{y}^{1} \right)-\lim_{k\rightarrow +\infty}L_{\rho}\left (\{\mathbf{x}_q^{k}\}_{q=1}^{Q}, \mathbf{x}_0^{k}, \mathbf{y}^{k}\right)\\
      &\geq \sum_{k=1}^{+\infty}\sum_{q=1}^{Q}\frac{\gamma_q(\rho)}{2}\|\mathbf{x}_q^{k+1}-\mathbf{x}_{q}^{k}\|_2^2 \\
&+\sum_{k=1}^{+\infty}\Big(\frac{\gamma_0(\rho)}{2}-\frac{\lambda_{\rm max}^2(\mathbf{H}^{H} \mathbf{H})}{\rho}\Big)\|\mathbf{x}_0^{k+1}-\mathbf{x}_0^k\|_2^2.\\
  \end{split}
  \end{equation}
From Lemma 3, one can see that $\lim_{k\rightarrow +\infty}L_{\rho}\left (\{\mathbf{x}_q^{k}\}_{q=1}^{Q}, \mathbf{x}_0^{k}, \mathbf{y}^{k}\right)>-\infty$. Moreover, since $\frac{\gamma_0(\rho)}{2}-\frac{\lambda_{\rm max}^2(\mathbf{H}^{H} \mathbf{H})}{\rho}\geq0$, we can obtain
\begin{subequations}
\begin{align}
 &\lim\limits_{k\rightarrow+\infty} \|\mathbf{x}_0^{k+1}-\mathbf{x}_0^k\|_2 = 0, \label{eq:differenceX0} \\
 &\lim\limits_{k\rightarrow+\infty}\|\mathbf{x}_q^{k+1}-\mathbf{x}_{q}^{k}\|_2 = 0,\ \  q\in\{1,\dotsb,Q\}. \label{eq:differenceXq}
 \end{align}
\end{subequations}
Plugging \eqref{eq:differenceX0} into RHS of equation \eqref{eq:y_difference1}, we get
\begin{equation}\label{limYzero}
\lim\limits_{k\rightarrow+\infty}\|\mathbf{y}^{k+1}-\mathbf{y}^k\|_2 = 0.
\end {equation}
Plugging \eqref{limYzero} into \eqref{eq:y_update}, we get
\begin{align}\label{eq:differenceX01}
\lim\limits_{k\rightarrow+\infty}\|\mathbf{x}_0^{k+1}-\sum_{q=1}^{Q}2^{q-1}\mathbf{x}_q^{k+1} \|_2 = 0.
\end{align}
Since $\mathbf{x}_{q \rm R}, \mathbf{x}_{q \rm I} \in [-1\ 1]^U$, we can obtain the following convergence results from \eqref{eq:differenceXq}.
\begin{equation}\label{convergence xq}
\lim\limits_{k\rightarrow+\infty}\mathbf{x}_q^{k}\!=\!\mathbf{x}_q^{*},~\forall~q=1,2,\cdots,Q.
\end{equation}
Plugging \eqref{convergence xq} into \eqref{eq:differenceX01}, we can see
\begin{equation}\label{convergence_x0}
\lim\limits_{k\rightarrow+\infty}\mathbf{x}_0^{k}=\mathbf{x}_0^{*}=\sum_{q=1}^{Q}2^{q-1}\mathbf{x}_q^{*}.
 \end{equation}
From \eqref{eq:z_nabla_expression},  we can derive
  \begin{equation}\label{limy_x0}
    \lim\limits_{k\rightarrow+\infty} \mathbf{y}^k = \lim\limits_{k\rightarrow+\infty}-\nabla_{\mathbf{x}_0} \ell\left(\mathbf{x}_0^{k}\right).
   \end{equation}
Since $\|\nabla_{\mathbf{x}_0} \ell\left(\mathbf{x}_0^{k+1}\right)-\nabla_{\mathbf{x}_0} \ell\left(\mathbf{x}_0^{k}\right)\|_2^2 \le \lambda_{\rm max}^2(\mathbf{H}^{H} \mathbf{H}) \|\mathbf{x}_0^{k+1}-\mathbf{x}_0^{k}\|_2^2$ and $\mathbf{x}^{k}_0$ is bounded, we can conclude that all the elements in $\nabla_{\mathbf{x}_0} \ell\left(\mathbf{x}_0\right)$ are also bounded. Therefore, equation \eqref{limYzero} indicates
\begin{equation}\label{convergence z}
\lim\limits_{k\rightarrow+\infty} \mathbf{y}^k =  \mathbf{y}^{*}.
\end{equation}

Then, we prove $\{\mathbf{x}_q^*\}_{q=1}^{Q}$ is a stationary point of problem \eqref{eq:PS_ML}.

Since $\{\mathbf{x}_q^{k+1}\}_{q=1}^{Q} = \underset{\mathbf{x}_q \in \tilde{\mathcal{X}}_q^{U}}{\arg\min} \; L_{\rho}\left (\{\mathbf{x}_q\}_{q=1}^{Q}, \mathbf{x}^{k}_0, \mathbf{y}^{k} \right) $ in \eqref{eq:x_q_update}, and $ L_{\rho}\left (\{\mathbf{x}_q\}_{q=1}^{Q}, \mathbf{x}^{k}_0, \mathbf{y}^{k} \right)$ is strongly convex w.r.t.\ $\mathbf{x}_q$, we have the following optimality conditions.

\begin{equation}\label{optimality_xq}
\begin{split}
&{\rm Re}\Big\langle\nabla_{\mathbf{x}_q}\Big(\ell\left(\mathbf{x}_0^{k}\right)-\sum_{q=1}^{Q}\frac{\alpha_q}{2}\Vert\mathbf{x}_q^{k+1} \Vert_2 ^{2}\\
&\!+\!\big\langle\!\mathbf{x}_0^{k}\!-\!\sum_{q=1}^{Q}2^{q\!-\!1} \mathbf{x}_q^{k\!+\!1}, \mathbf{y}^{k}\big\rangle\!+\!\frac{\rho}{2}\big\|\mathbf{x}_0^{k}\!-\!\sum_{q=1}^{Q}2^{q\!-\!1}\mathbf{x}_q^{k\!+\!1} \big\|_2^2\!\Big),\\
&\hspace{0.1cm} \mathbf{x}_q-\mathbf{x}_q^{k+1} \Big\rangle \ge 0,  \quad \forall~\mathbf{x}_q\in \tilde{\mathcal{X}}_q^{U}, q=1,2,\cdots,Q.
\end{split}
\end{equation}
When $k\rightarrow+\infty$, plugging convergence result \eqref{convergence_x0} into \eqref{optimality_xq}, it can be simplified as
\begin{align*}
 &{\rm Re}\Big \langle\nabla_{\mathbf{x}_q} \Big(\ell\big(\sum_{q=1}^{Q}2^{q-1} \mathbf{x}_q^*\big)-\sum_{q=1}^{Q}\frac{\alpha_q}{2}\Vert\mathbf{x}_q^* \Vert_2 ^{2}\Big),\mathbf{x}_q-\mathbf{x}^*_q\Big\rangle
 \ge 0,\\
 &\hspace{2cm}\forall~\mathbf{x}_q\in \tilde{\mathcal{X}}_q^{U},\; q=1,2,\cdots,Q.
\end{align*}
which completes the proof.$\hfill\blacksquare$

\section{Proof of Theorem 2}\label{Iteration complexity Proof}
To be clear, here we rewrite \eqref{eq:L_difference} as
\[
  \begin{split}
    &L_{\rho}\left (\{\mathbf{x}_q^{k}\}_{q=1}^{Q}, \mathbf{x}_0^{k}, \mathbf{y}^{k} \right)-L_{\rho}\left (\{\mathbf{x}_q^{k+1}\}_{q=1}^{Q}, \mathbf{x}_0^{k+1}, \mathbf{y}^{k+1}\right)\\
      &\geq \sum_{q=1}^{Q}-\frac{\gamma_q(\rho)}{2}\|\mathbf{x}_q^{k+1}-\mathbf{x}_{q}^{k}\|_2^2 \nonumber\\
&-\Big(\frac{\gamma(\rho)}{2}-\frac{\lambda_{\rm max}^2(\mathbf{H}^{H} \mathbf{H})}{\rho}\Big)\|\mathbf{x}_0^{k+1}-\mathbf{x}_0^k\|_2^2.\\
  \end{split}
\]

According to Lemma \ref{lemma:L_difference}, there exists a constant $C\!=\!\min\!\left\{\!\{ \frac{\gamma_q(\rho)}{2}\}_{q=1}^{Q}, \Big(\frac{\gamma(\rho)}{2}\!\!-\!\!\frac{\lambda_{\rm max}^2(\mathbf{H}^{H} \mathbf{H})}{\rho}\Big)\! \right\}$
such that
\[
  \begin{split}
    &L_{\rho}\left (\{\mathbf{x}_q^{k}\}_{q=1}^{Q}, \mathbf{x}_0^{k}, \mathbf{y}^{k} \right)-L_{\rho}\left (\{\mathbf{x}_q^{k+1}\}_{q=1}^{Q}, \mathbf{x}_0^{k+1}, \mathbf{y}^{k+1}\right)\\
      &\geq C\Big(\sum_{q=1}^{Q}\|\mathbf{x}_q^{k+1}-\mathbf{x}_{q}^{k}\|_2^2 +\|\mathbf{x}_0^{k+1}-\mathbf{x}_0^k\|_2^2\Big).\\
  \end{split}
\]
 Summing both sides of the above inequality from $k=1,\cdots, K$, we have
 \begin{equation}\label{dL}
   \begin{split}
      &L_{\rho}\left (\{\mathbf{x}_q^{1}\}_{q=1}^{Q}, \mathbf{x}_0^{1}, \mathbf{y}^{1} \right)-L_{\rho}\left (\{\mathbf{x}_q^{K+1}\}_{q=1}^{Q}, \mathbf{x}_0^{K+1}, \mathbf{y}^{K+1}\right)\\
      &\geq \sum_{k=1}^{K}\Bigg(C\Big(\sum_{q=1}^{Q}\|\mathbf{x}_q^{k+1}-\mathbf{x}_{q}^{k}\|_2^2 +\|\mathbf{x}_0^{k+1}-\mathbf{x}_0^k\|_2^2\Big)\Bigg).\\
  \end{split}
 \end{equation}
 Since $t = \underset{k}{\rm min}\{k|\sum_{q=1}^{Q}\|\mathbf{x}_q^{k+1}-\mathbf{x}_{q}^{k}\|_2^2 + \|\mathbf{x}_0^{k+1}-\mathbf{x}_0^{k}\|_2^2\leq\epsilon\}$, we can change \eqref{dL} to
 \begin{equation}\label{Relax L}
   \begin{split}
    & L_{\rho}\left (\{\mathbf{x}_q^{1}\}_{q=1}^{Q}, \mathbf{x}_0^{1}, \mathbf{y}^{1} \right)-L_{\rho}\left (\{\mathbf{x}_q^{K+1}\}_{q=1}^{Q}, \mathbf{x}_0^{K+1}, \mathbf{y}^{K+1}\right)\\
    &  \geq tC\epsilon.
   \end{split}
 \end{equation}
 Since we have $L_{\rho} \!\left (\{\mathbf{x}_q^{K+1}\}_{q=1}^{Q}, \mathbf{x}_0^{K+1}, \mathbf{y}^{K+1}\right)\! \geq\! L_{\rho}\! \left (\{\mathbf{x}_q^*\}_{q=1}^{Q}, \mathbf{x}_0^*, \mathbf{y}^*\right)$, \eqref{Relax L} can be reduced to
 \[
   \begin{split}
     t \!&\leq\! \frac{1}{C\epsilon}\bigg(\!L_{\rho}\left (\{\mathbf{x}_q^{1}\}_{q=1}^{Q}, \mathbf{x}_0^{1}, \mathbf{y}^{1} \right) \!-\! L_{\rho}\left (\{\mathbf{x}_q^*\}_{q=1}^{Q}, \mathbf{x}_0^*, \mathbf{y}^*\right)\!\bigg),
   \end{split}
 \]
 where $L_{\rho}\left (\{\mathbf{x}_q^*\}_{q=1}^{Q}, \mathbf{x}_0^*, \mathbf{y}^*\right)=\ell\left(\mathbf{x}_0^*\right) - \sum_{q=1}^{Q}\frac{\alpha_q}{2} \Vert\mathbf{x}_q^*\Vert_2 ^{2}$, which concludes the proof of Theorem \ref{iteration complexity}.
 $\hfill\blacksquare$

\ifCLASSOPTIONcaptionsoff
  \newpage
\fi


\end{document}